\mathchardef\mhyphen="2D %hyphen used in math mode
\newtheorem{theorem}{Theorem}
\newtheorem{corollary}[theorem]{Corollary}
\newtheorem{lemma}[theorem]{Lemma}
\newtheorem{definition}[theorem]{Definition}
\newtheorem{claim}[theorem]{Claim}
\newtheorem{fact}[theorem]{Fact}
\newenvironment{proof-sketch}{\noindent{\bf Sketch of Proof:}\hspace*{1em}}{\qed\bigskip}
\newenvironment{proof-idea}{\noindent{\bf Proof Idea:}\hspace*{1em}}{\qed\bigskip}
\newenvironment{proof-of-lemma}[1]{\noindent{\bf Proof of Lemma #1:}\hspace*{1em}}{\qed\bigskip}
\newenvironment{proof-of-proposition}[1]{\noindent{\bf Proof of Proposition #1:}\hspace*{1em}}{\qed\bigskip}
\newenvironment{proof-attempt}{\noindent{\bf Proof Attempt:}\hspace*{1em}}{\qed\bigskip}
\newenvironment{remark}{\noindent{\bf Remark:}\hspace*{1em}}{\bigskip}
\newcommand{\pr}{\text{Pr}}
\newcommand{\tensor}{\otimes}
\def\union{\cup}
\newcommand{\bra}[1]{\langle #1|}
\newcommand{\ket}[1]{|#1\rangle}
\renewcommand{\H}{\mathcal{H}}
\newcommand{\NP}{\textsf{NP}}
\newcommand{\NPC}{\textsf{NP}\mhyphen\textsf{complete}}
\newcommand{\QMA}{\textsf{QMA}}
\newcommand{\QMAoC}{\textsf{QMA$_1$-complete}}
\newcommand{\RR}{\textrm{R}}
\newcommand{\LL}{{\cal L}}
\newcommand{\sat}{\textsc{sat}}
\newcommand{\ksat}{k\mhyphen\textsc{sat}}
\newcommand{\qsat}{\textsc{qsat}}
\newcommand{\kqsat}{k\mhyphen\textsc{qsat}}
\definecolor{gray}{rgb}{0.5,0.5,0.5}
 \author{Andris Ambainis\footnote{Faculty of Computing, University of Latvia, Riga, Latvia. Supported by University
 of Latvia Research Grant ZB01-100 and FP7 Marie Curie Grant PIRG02-GA-2007-224886}
 \and
 Julia Kempe\footnote{Blavatnik School of Computer Science, Tel Aviv University, Tel Aviv 69978, Israel.
Supported by the European Commission under the Integrated Project Qubit Applications (QAP) funded by the IST
directorate as Contract Number 015848, by an Alon Fellowship of the Israeli Higher Council of Academic Research, by an
Individual Research Grant of the Israeli Science Foundation, by a European Research Council (ERC) Starting Grant and by
a Raymond and Beverly Sackler Career Development Chair.}
 \and
 Or Sattath
 \thanks{School of Computer Science and Engineering,
 The Hebrew University, Jerusalem, Israel and Blavatnik School of Computer Science, Tel Aviv University, Tel Aviv, Israel.
 sattath@cs.huji.ac.il.}
 }
\title{\bf A Quantum Lov\'{a}sz Local Lemma}
\begin{document}
\maketitle
\begin{abstract}
The Lov\'{a}sz Local Lemma (LLL) is a powerful tool in probability theory to show the existence of combinatorial
objects meeting a prescribed collection of ``weakly dependent" criteria. We show that the LLL extends to a much more
general geometric setting, where events are replaced with subspaces and probability is replaced with relative
dimension, which allows to lower bound the dimension of the intersection of vector spaces under certain independence
conditions.

Our result immediately applies to the $\kqsat$ problem: For instance we show that any collection of rank $1$ projectors
with the property that each qubit appears in at most $2^k/(e \cdot k)$ of them, has a joint satisfiable state.

We then apply our results to the recently studied model of random $\kqsat$. Recent works have shown that the
satisfiable region extends up to a density of $1$ in the large $k$ limit, where the density is the ratio of projectors
to qubits. Using a hybrid approach building on work by Laumann et al. \cite{laumann09dimer} we greatly extend the known
satisfiable region for random $\kqsat$ to a density of $\Omega(2^k/k^2)$. Since our tool allows us to show the
existence of joint satisfying states without the need to construct them, we are able to penetrate into regions where
the satisfying states are conjectured to be entangled, avoiding the need to construct them, which has limited previous
approaches to product states.
\end{abstract}

\newpage

\section{Introduction and Results}

In probability theory, if a number of events are all independent of one another, then there is a positive (possibly
small) probability that none of the events will occur. The Lov\'{a}sz Local Lemma (proved in 1975 by Erd\"os and
Lov\'{a}sz) allows one to relax the independence condition slightly: As long as the events are ``mostly" independent of
one another and are not individually too likely, then there is still a positive probability that none of them occurs.
In its simplest form it states

\begin{theorem}[\cite{erdos1975problems}]
\label{th:symmetric_lll} Let $B_1, B_2,\ldots, B_n$ be events with $\pr(B_i)\leq p$ and such that each event is
mutually independent of all but $d$ of the others. If $p \cdot e \cdot (d+1) \leq 1$ then  $\pr(\bigwedge_{i=1}^n
B_i^c)
> 0$.
\end{theorem}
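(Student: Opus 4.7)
The plan is to derive the symmetric form from the more flexible \emph{asymmetric} Lovász Local Lemma, whose proof is the real work. The asymmetric version says: if for each $i$ one can pick $x_i \in [0,1)$ such that $\Pr(B_i) \le x_i \prod_{j \in N(i)} (1-x_j)$, where $N(i)$ is the (at most $d$) indices on which $B_i$ depends, then $\Pr(\bigwedge_i B_i^c) \ge \prod_i (1-x_i) > 0$. Once this is in hand, setting $x_i = 1/(d+1)$ uniformly and using the elementary bound $(1-1/(d+1))^d \ge 1/e$ turns the hypothesis $p \le 1/(e(d+1))$ into the required inequality $\Pr(B_i) \le x_i \prod_{j \in N(i)}(1-x_j)$, and the conclusion is immediate.

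The heart of the proof is then the asymmetric statement, which I would attack by proving the following stronger auxiliary claim by induction on $|S|$: for every index $i$ and every set $S$ of other indices with $i \notin S$,
\[
\Pr\!\Bigl(B_i \,\Big|\, \bigwedge_{j \in S} B_j^c\Bigr) \;\le\; x_i.
\]
The base case $S = \emptyset$ follows directly from the hypothesis $\Pr(B_i) \le x_i \prod_{j \in N(i)}(1-x_j) \le x_i$. For the inductive step I would split $S = S_1 \cup S_2$ with $S_1 = S \cap N(i)$ and $S_2 = S \setminus N(i)$, and write the conditional probability as a ratio
\[
\Pr\!\Bigl(B_i \,\Big|\, \bigwedge_{j \in S} B_j^c\Bigr) \;=\; \frac{\Pr(B_i \wedge \bigwedge_{j \in S_1} B_j^c \mid \bigwedge_{j \in S_2} B_j^c)}{\Pr(\bigwedge_{j \in S_1} B_j^c \mid \bigwedge_{j \in S_2} B_j^c)}.
\]
For the numerator I would drop the $B_{j}^c$ terms and use mutual independence of $B_i$ from the variables outside $N(i)$ to bound it by $\Pr(B_i) \le x_i \prod_{j \in N(i)}(1-x_j)$. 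For the denominator I would iteratively expand $\Pr(\bigwedge_{j \in S_1} B_j^c \mid \cdots)$ using the chain rule and apply the inductive hypothesis (each conditioning set encountered is strictly smaller than $S$), yielding a lower bound of $\prod_{j \in S_1}(1-x_j)$. The ratio is then at most $x_i$, completing the induction.

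Given the auxiliary claim, the main conclusion follows by a telescoping product:
\[
\Pr\!\Bigl(\bigwedge_{i=1}^n B_i^c\Bigr) \;=\; \prod_{i=1}^n \Pr\!\Bigl(B_i^c \,\Big|\, \bigwedge_{j<i} B_j^c\Bigr) \;\ge\; \prod_{i=1}^n (1-x_i) \;>\; 0,
\]
which is strictly positive since each $x_i < 1$.

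The main obstacle, and the place where one must be careful, is the inductive step: one needs the event $B_i$ to be independent not of a single event but of the \emph{conjunction} $\bigwedge_{j \in S_2} B_j^c$, which requires the ``mutual independence'' hypothesis (independence from every Boolean combination of events outside $N(i)$), rather than mere pairwise independence. Handling the degenerate case where the conditioning event has probability zero (so the conditional probability is undefined) is a minor technicality that can be avoided by discarding such indices from $S$ inductively, or by adopting the convention that the claim is vacuous in that case.
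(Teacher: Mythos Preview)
Your proposal is correct and follows essentially the same route as the paper: derive the symmetric statement from the asymmetric LLL by the uniform choice $x_i=1/(d+1)$ together with $(1-\tfrac{1}{d+1})^d\ge 1/e$, and prove the asymmetric version by induction on $|S|$ via the dependent/independent split of $S$ and a numerator/denominator estimate. The only cosmetic difference is that the paper phrases the induction in terms of the ``good'' events (showing $\Pr(B_i^c\mid\cdot)\ge 1-x_i$) and bounds the numerator using an inclusion--exclusion inequality rather than the complement rule $\Pr(A\mid C)+\Pr(A^c\mid C)=1$; this variant is chosen because the complement rule fails in the quantum setting, but in the classical case your standard Alon--Spencer argument is equivalent.
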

%A more formal version of this statement is given in section \ref{sec:qlll}, theorem \ref{th:general_lll}.

The Lov\'{a}sz Local Lemma (LLL) is an extremely powerful tool in probability theory as it supplies a way of dealing
with rare events and of showing that a certain event holds with positive probability. It has found an enormous range of
applications (see, e.g., \cite{alon2004probabilistic}), for instance to graph colorability \cite{erdos1975problems},
lower bounds on Ramsey numbers \cite{spencer_asymptotic_1977}, geometry \cite{mani-levitska_decomposition_1987}, and
algorithms \cite{moser2009constructive}. For many of these results there is no known proof which does not use the Local
Lemma.

One notable application of the LLL is to determine conditions under which a $k$-CNF formula is {\em satisfiable}. If
each clause of such a formula $\Phi$ involves a disjoint set of variables, then it is obvious that $\Phi$ is
satisfiable. One way to see this is to observe that a {\em random assignment} violates a clause with probability
$p=2^{-k}$ and hence the probability that all $m$ clauses are satisfied by a random assignment is $(1-p)^m>0$. But what
if some of the clauses share variables, i.e., if they are ``weakly dependent"? This question is readily answered by
using the LLL:
\begin{corollary}
\label{cor:lll_for_sat} Let $\Phi$ be a $\ksat$ formula in CNF-form. If every variable appears in at most $2^k/(e \cdot
k)$ clauses then $\Phi$ is satisfiable.
\end{corollary}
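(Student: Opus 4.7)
The plan is to deduce the corollary directly from the symmetric Local Lemma (Theorem~\ref{th:symmetric_lll}) by choosing the natural probability space of uniformly random truth assignments and taking the bad events to be clause violations.

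First I would fix a $k$-CNF formula $\Phi=C_1\wedge\cdots\wedge C_m$ on $n$ variables and sample an assignment $x\in\{0,1\}^n$ uniformly at random. Define $B_i$ to be the event ``$C_i$ is violated by $x$''. Since each clause $C_i$ is a disjunction of exactly $k$ literals, there is a unique assignment to its $k$ variables that falsifies it, hence $\pr(B_i)=2^{-k}$. This gives the bound $p=2^{-k}$.

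Second, I would identify the dependency graph. The event $B_i$ depends only on the $k$ variables appearing in $C_i$, so $B_i$ is mutually independent of the collection of all $B_j$ with which it shares no variable. Under the hypothesis that each variable appears in at most $2^k/(e\cdot k)$ clauses, the number of clauses $C_j\neq C_i$ that share at least one variable with $C_i$ is at most $k\cdot 2^k/(e\cdot k)=2^k/e$ (counting $C_i$ itself within the bound). Thus we may take $d+1\le 2^k/e$ in the LLL.

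Finally, I would plug these bounds into Theorem~\ref{th:symmetric_lll}: $p\cdot e\cdot(d+1)\le 2^{-k}\cdot e\cdot 2^k/e=1$. The LLL then yields $\pr(\bigwedge_i B_i^c)>0$, so at least one assignment $x$ satisfies every clause simultaneously, i.e.\ $\Phi$ is satisfiable. The proof is essentially a bookkeeping exercise; the only minor subtlety is being careful about whether the $d$ in the LLL counts the clause itself or not, so that the factor of $k$ in the numerator and denominator cancels cleanly to give exactly the product~$1$.
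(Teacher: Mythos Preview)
Your proposal is correct and matches the paper's own argument essentially line for line: the paper also applies Theorem~\ref{th:symmetric_lll} with $B_i$ the event that a uniformly random assignment violates $C_i$, takes $p=2^{-k}$, and bounds the dependency degree by $d\le 2^k/e-k$ (you get the slightly weaker but still sufficient $d+1\le 2^k/e$). Your handling of the ``does $d$ count $C_i$ itself'' bookkeeping is fine, since each of the $k$ variables of $C_i$ lies in at most $2^k/(ek)-1$ \emph{other} clauses, giving $d\le 2^k/e-k$ and hence $p\cdot e\cdot(d+1)\le 1$.
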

This corollary follows from Thm.~\ref{th:symmetric_lll} by letting $B_i$ be the event that the $i$-th clause is not
satisfied for a random assignment, which happens with probability $p=2^{-k}$, and noting that each clause depends only
on the $d \leq (2^k/e)-k$ other clauses that share a variable with it. In particular this corollary gives a better
understanding of $\sat$, the prototype $\NPC$ problem in classical complexity theory.

In the last decade enormous advances have been made in the area of {\em quantum complexity}, the theory of easy and
hard problems for a quantum computer. In particular, a natural quantum analog of $\ksat$, called $\kqsat$, was
introduced by Bravyi \cite{bravyi2006efficient}: Instead of clauses we have projectors $\Pi_{1},\ldots,\Pi_{m}$, each
acting non-trivially on $k$ qubits, and we have to decide if all of them can be satisfied jointly. More precisely, we
ask if there is a state $\ket{\Psi}$ on all qubits such that $\Pi_i \ket{\Psi}=0$ for all $1 \leq i \leq m$ (in physics
language: we ask if the system is {\em frustration-free}). This problem\footnote{when defined with an appropriate
promise gap for no-instances} was shown to be $\QMAoC$ for $k\geq 4$ \cite{bravyi2006efficient} and as such has
received considerable attention
\cite{liu_consistency_2006,liu_complexity_2007,beigi_complexity_2007,bravyi_complexity_2008,laumann2009phase,laumann09dimer,bravyi2009bounds}.

Note that the question is easy for a set of {\em ``disjoint"} projectors: If no two projectors share any qubits, then
clearly $\ket{\Psi}=\ket{\Psi_1}\otimes \cdots \otimes \ket{\Psi_m}$ is a satisfying state, where $\ket{\Psi_i}$ is
such that $\Pi_i \ket{\Psi_i}=0$, just like in the case of disjoint $\ksat$. It is thus very natural to ask if there
still is a joint satisfying state when the projectors are ''weakly" dependent, i.e., share qubits only with a few other
projectors. One might speculate that a {\em quantum} local lemma should provide the answer.

Motivated by this question we ask: Is there a quantum local lemma? What will take the role of notions like {\em
probability space, probability, events, conditional probability} and {\em mutual independence}? What properties should
they have? And can we prove an analogous statement to Cor.~\ref{cor:lll_for_sat} for $\kqsat$?

\paragraph{Our results:} We answer all these questions in the positive by first showing how to generalize the notions
of probability and independence in a meaningful way applicable to the quantum setting and then by proving a quantum
local lemma. We then show that it implies a statement analogous to Cor.~\ref{cor:lll_for_sat} for $\kqsat$ with exactly
the same parameters as in the classical case. As we describe later in this section, we then combine our results with
recent advances in the study of random $\qsat$ to substantially widen the satisfiable range and to provide greatly
improved lower bounds on the conjectured threshold between the satisfiable and the unsatisfiable region.

Let us first focus on the conceptual step of finding the right notions of {\em probability} and {\em independence}. In
the quantum setting we deal with vector spaces and the probability of a certain event to happen is determined by its
dimension. It is thus very natural to have the following correspondence of classical and ``quantum" notions, using the
apparent similarity between events and linear spaces:

\begin{definition}\label{def:R} We define the following, in correspondence with the classical notions:
\begin{align*}
\begin{array}{lcl}
\textrm{Probability space } \Omega & \rightarrow & \textrm{Vector space } V\\
\textrm{Event } A \in \Omega & \rightarrow & \textrm{Subspace } A \subseteq V \\
\textrm{Complement } A^c=\Omega \setminus A & \rightarrow & \textrm{Orthogonal subspace } A^\perp\\
\textrm{Probability} \Pr(A) & \rightarrow & \textrm{Relative dimension } \RR(A):=\frac{{\dim A}}{{\dim V}}\\
\textrm{Union and Disjunction } A \vee B, A \wedge B & \rightarrow & A +B=\{a+b|a \in A, b \in B\}, A \cap B\\
\textrm{Conditioning } \Pr(A|B)=\frac{\Pr(A \wedge B)}{\Pr(B)} & \rightarrow &  \RR(A|B):=\frac{\RR(A \cap B)}{\RR(B)}=\frac{\dim(A \cap B)}{\dim(B)}\\
A, B \textrm{ independent } \Pr(A \wedge B)=\Pr(A)\cdot \Pr(B)& \rightarrow & A, B \textrm{ R-independent } \RR(A \cap
B)=\RR(A)\cdot \RR(B)
\end{array}
\end{align*}
\end{definition}
This definition by analogy brings us surprisingly far. It can be verified (see Sec.~\ref{sec:prop}) that many useful
properties hold for $\RR$, like (i) $0 \leq \RR \leq 1$, (ii) monotonicity: $A \subseteq B \Rightarrow \RR(A) \leq
\RR(B)$, (iii) the chain rule (iv) an ``inclusion/exclusion" formula and (v) $\RR(A)+\RR(A^\perp)=1$.

There are, however, two important differences between probability and relative dimension. One concerns the complement
of events. For probabilities, the conditional version of property (v) holds: $\Pr(A|B)+\Pr(A^c|B)=1$. For $\RR$ we can
easily find counterexamples to the statement $\RR(A|B)+\RR(A^\perp|B)=1$ (for instance two non-equal non-orthogonal
lines $A$ and $B$ in a two-dimensional space, where $\RR(A|B)+\RR(A^\perp|B)=0$). It is this property that is used in
most proofs of the local lemma, and one of the difficulties in our proof of a quantum LLL (QLLL) is to circumvent its
use.

The second difference concerns our notion of {\em R-independence}. In probability theory, if $A$ and $B$ are
independent, then so are $A^c$ and $B$. Again, this is not true any more for $\RR$ and easy counterexamples can be
found (see Sec.~\ref{sec:prop}). It is thus important to find the right formulation of a quantum local lemma concerning
mutual independence of events. Keeping these caveats in mind and using our notion of relative dimension, we prove a
general quantum LLL (see Sec. \ref{sec:qlll}), which in its simplest form gives:

\begin{theorem}\label{thm:sym-qlll}
Let $X_1,X_2,\ldots, X_n$ be subspaces, where $\RR(X_i)\geq 1-p$ and such that each subspace is mutually R-independent
of all but $d$ of the others. If $p \cdot e \cdot (d+1) \leq 1$ then  $\RR(\bigcap_{i=1}^n X_i)   > 0$.
\end{theorem}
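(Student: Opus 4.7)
The plan is to adapt the classical inductive proof of the LLL. I would prove by induction on $|S|$ that for every $i$ and every index set $S$ with $i \notin S$,
\[
\RR\Bigl(X_i \,\Big|\, \bigcap_{j\in S} X_j\Bigr) \;\geq\; 1 - \tfrac{1}{d+1},
\]
and then obtain the theorem from the chain rule for $\RR$:
\[
\RR\Bigl(\bigcap_{i=1}^{n} X_i\Bigr) \;=\; \prod_{i=1}^{n}\RR\Bigl(X_i \,\Big|\, \bigcap_{j<i} X_j\Bigr) \;\geq\; \Bigl(1-\tfrac{1}{d+1}\Bigr)^n \;>\; 0.
\]
The base case $S=\emptyset$ uses only $\RR(X_i)\geq 1-p$ together with the hypothesis $p\leq 1/[e(d+1)]$.

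For the inductive step, split $S=S_1\sqcup S_2$ with $S_1:=S\cap\Gamma(i)$, and set $U:=\bigcap_{j\in S_2}X_j$ and $W:=\bigcap_{j\in S_1}X_j$. If $S_1$ is empty, the appropriate mutual R-independence of $X_i$ from the $X_j$ defining $U$ gives $\RR(X_i\mid U)=\RR(X_i)\geq 1-p\geq 1-1/(d+1)$, and we are done. Otherwise, write
\[
\RR(X_i \mid U\cap W) \;=\; \frac{\RR(X_i\cap W \mid U)}{\RR(W\mid U)}
\]
and bound numerator and denominator separately.

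The genuinely new ingredient -- and the step I expect to be the main obstacle -- is the treatment of the numerator, since the classical proof relies on $\pr(A^c\mid B)=1-\pr(A\mid B)$, an identity that fails for relative dimension. My substitute is a Bonferroni-style bound obtained by applying the standard dimension formula $\dim(A\cap B)\geq \dim(A)+\dim(B)-\dim(U)$ inside the ambient space $U$, with $A=X_i\cap U$ and $B=W\cap U$:
\[
\RR(X_i\cap W\mid U)\;\geq\;\RR(X_i\mid U) + \RR(W\mid U) - 1.
\]
For the denominator, the chain rule inside $U$ expresses $\RR(W\mid U)$ as a product of $|S_1|\leq d$ conditional factors, each with a conditioning set strictly smaller than $S$, so the inductive hypothesis applies and yields $\RR(W\mid U)\geq (1-1/(d+1))^{d}\geq 1/e$. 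Combining with $\RR(X_i\mid U)=\RR(X_i)\geq 1-p$ (by R-independence), we obtain
\[
1-\RR(X_i\mid U\cap W)\;\leq\;\frac{1-\RR(X_i\mid U)}{\RR(W\mid U)}\;\leq\; pe\;\leq\;\tfrac{1}{d+1},
\]
which closes the induction. A secondary subtlety to resolve is to verify that the paper's definition of mutual R-independence is strong enough to guarantee $\RR(X_i\mid U)=\RR(X_i)$ for an arbitrary intersection $U$ over the ``independent'' indices, since the naive classical reasoning ``$A$ independent of $B$ implies $A^{\perp}$ independent of $B$'' is false in the $\RR$-setting.
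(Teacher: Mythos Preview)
Your proposal is correct and is essentially the same argument the paper gives: the paper proves the asymmetric version (Theorem~\ref{th:asymmetric_qlll}) by the identical induction, splitting $S$ into dependent and independent parts, writing $1-\RR(X_i\mid \mathcal X_I\cap\mathcal X_D)=\frac{\RR(\mathcal X_D\mid \mathcal X_I)-\RR(X_i\cap\mathcal X_D\mid \mathcal X_I)}{\RR(\mathcal X_D\mid \mathcal X_I)}$, bounding the numerator via exactly your Bonferroni-type inequality (stated there as Lemma~\ref{lem:propR}(vi)), and bounding the denominator by the chain rule plus induction; it then specializes to the symmetric case with $y_i=1/(d+1)$. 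Your ``secondary subtlety'' is handled by the paper's definition of mutual R-independence, which already quantifies over all subsets of the independent indices, so $\RR(X_i\mid U)=\RR(X_i)$ holds by definition.
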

Note that in contrast to the classical LLL in Thm.~\ref{th:symmetric_lll} which is stated in terms of the ``bad" events
$B_i$, here we are working with the ``good" events. While in the classical case these two formulations are equivalent,
this is no longer the case for our notion of $R$-independence.

An immediate application of our QLLL is to $\kqsat$, where we are able to show the exact analogue of
Cor.~\ref{cor:lll_for_sat}.

\begin{corollary}
\label{cor:qsat_satisfiable} Let $\{\Pi_1,\ldots,\Pi_m\}$ be a $\kqsat$ instance where all projectors have rank $1$. If
every qubit appears in at most $2^k/(e \cdot k)$ projectors, then the instance is satisfiable.
\end{corollary}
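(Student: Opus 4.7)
The plan is to apply the symmetric quantum LLL (Theorem~\ref{thm:sym-qlll}) to the ``good'' subspaces $X_i := \ker \Pi_i \subseteq (\mathbb{C}^2)^{\otimes n}$. Since each $\Pi_i$ is a rank-$1$ projector acting nontrivially on a set $S_i$ of $k$ qubits, we may write $X_i = Y_i \otimes I_{\overline{S_i}}$ where $Y_i \subset (\mathbb{C}^2)^{\otimes k}$ has dimension $2^k-1$. Hence $\RR(X_i) = (2^k-1)/2^k = 1 - 2^{-k}$, so we may take $p = 2^{-k}$.

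Next, I would establish the R-independence structure from the tensor-product geometry. Call two projectors $\Pi_i,\Pi_j$ \emph{disjoint} if $S_i \cap S_j = \emptyset$, and build the dependency graph accordingly. The claim is that each $X_i$ is mutually R-independent of the collection $\{X_j : S_j \cap S_i = \emptyset\}$. Indeed, for any subset $T$ of indices with $S_j \cap S_i = \emptyset$ for all $j \in T$, the subspace $Z := \bigcap_{j \in T} X_j$ is of the form $W \otimes I_{\overline{\cup_{j \in T} S_j}}$ for some $W$ on qubits $\cup_{j \in T} S_j$, a set disjoint from $S_i$. Using the tensor decomposition, one computes directly that $\dim(X_i \cap Z) = (\dim Y_i / 2^{|S_i|}) \cdot \dim Z$, i.e., $\RR(X_i \cap Z) = \RR(X_i)\cdot \RR(Z)$, which is exactly the R-independence condition needed by Theorem~\ref{thm:sym-qlll}.

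It remains to bound the degree $d$ of the dependency graph. Each qubit in $S_i$ appears in at most $2^k/(e \cdot k)$ projectors, so the total number of projectors $\Pi_j \neq \Pi_i$ sharing at least one qubit with $\Pi_i$ is at most $k \cdot \bigl(2^k/(e \cdot k)\bigr) - 1 = 2^k/e - 1$. Hence $d \leq 2^k/e - 1$, so
\[
p \cdot e \cdot (d+1) \leq 2^{-k} \cdot e \cdot \frac{2^k}{e} = 1.
\]
By Theorem~\ref{thm:sym-qlll}, $\RR\bigl(\bigcap_{i=1}^m X_i\bigr) > 0$, so the joint kernel is nonempty and any state in it satisfies all projectors.

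The only delicate point is the R-independence verification in the second step: unlike classical independence, R-independence is not preserved under complementation, so it is essential to work with the kernels (the ``good'' subspaces) rather than with the images of the $\Pi_i$'s, and to check independence with the full intersection $Z$ rather than only pairwise. Once the tensor-product computation is in hand, however, the rest of the argument parallels the classical deduction of Corollary~\ref{cor:lll_for_sat} from Theorem~\ref{th:symmetric_lll}.
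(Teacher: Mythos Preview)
Your proposal is correct and follows essentially the same approach as the paper: set $X_i=\ker\Pi_i$, note $\RR(X_i)=1-2^{-k}$, invoke the tensor-product argument (Lemma~\ref{lem:tensor-indep} in the paper) to establish mutual R-independence among projectors on disjoint qubit sets, bound the dependency degree, and apply Theorem~\ref{thm:sym-qlll}. The only cosmetic differences are that the paper uses the slightly tighter degree bound $d\le k(D-1)=2^k/e-k$ rather than your $d\le kD-1=2^k/e-1$ (both satisfy $p\cdot e\cdot(d+1)\le 1$), and that your notation $X_i=Y_i\otimes I_{\overline{S_i}}$ mixes subspaces with identity operators---you mean $X_i=Y_i\otimes(\mathbb{C}^2)^{\otimes(n-k)}$.
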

It follows by defining (with a slight abuse of notation) subspaces $X_i=\Pi_i^\perp$ of satisfying states for $\Pi_i$.
Noticing that $\RR(X_i)=1-2^{-k}$ and that projectors are mutually R-independent whenever they do not share qubits, and
observing that an equivalent formulation of the $\kqsat$-problem is to decide whether $\dim(\bigcap_{i=1}^m
\Pi_i^{\perp})> 0$, Thm.~\ref{thm:sym-qlll} gives the desired result (see Secs.~\ref{sec:prop} and~\ref{sec:qlll} for
details and more applications to $\kqsat$).

\paragraph{Random $\qsat$:}
Over the past few decades a considerable amount of effort was dedicated to understanding the behavior
of \emph{random $\ksat$ formulas}~\cite{kirkpatrick_critical_1994,mezard2002analytic,mezard_clustering_2005}.
Research in this area has witnessed a fruitful collaboration among computer scientists, physicists and mathematicians,
and is motivated in part by an attempt to better understand the class $\NP$,
as well as some recent surprising applications to hardness of approximation (see, e.g., \cite{FeigeRSAT}).

The main focus in this area is an attempt to understand the \emph{phase transition phenomenon} of random $\ksat$,
namely, the sharp transition from being satisfiable with high probability at low clause density to being unsatisfiable
with high probability at high clause density. The existence of this phase transition at a critical density $\alpha_c$
was proven by Friedgut in 1999 \cite{friedgut_sharp_1999};\footnote{Actually, it is still not known whether the
critical density converges for large $n$; see~\cite{friedgut_sharp_1999} for details on this technical (but nontrivial)
issue.} however only in the $k=2$ case its value is known exactly ($\alpha_c=1$
\cite{chvatal_mick_1992,goerdt_threshold_1992,bollobs_scaling_2001}). A long line of works for $k=3$ have narrowed it
down to $3.52 \leq \alpha_c \leq 4.49$ \cite{kaporis_selecting_2003,hajiaghayi_satisfiability_2003,diaz_new_2008} (with
evidence that $\alpha_{c} \approx 4.267$ \cite{mezard2002analytic}), and in the large $k$ limit it has been shown that
$2^k \ln 2 -O(k) \leq \alpha_c \leq 2^k \ln 2$ \cite{achlioptas2004threshold}.

The quantum analogue of this question, namely understanding the behavior of random $\kqsat$ instances,
has recently started attracting attention. As in the classical case, the motivation here comes
from an attempt to understand $\QMA_1$, the quantum analogue of $\NP$ (of which $\kqsat$ is a complete problem),
as well as the possibility of applications to hardness of approximation, but also from the hope to obtain insight
into phase transition effects in other quantum physical systems.

The definition of a random $\kqsat$ instance is similar to the one in the classical case. Fix some $\alpha>0$. Then a
random $\kqsat$ instance on $n$ qubits of density $\alpha$ is obtained by repeating the following $m=\alpha n$ times:
choose a random subset of $k$ qubits and pick a random rank-$1$ projector on them. An equivalent way to describe this
is to say that we choose a random $k$-uniform hypergraph from the ensemble $G_k(n,m)$, in which $m=\alpha n$
$k$-hyperedges are picked uniformly at random from the set of all possible $k$-hyperedges on $n$ vertices (with
repetitions) and then a random rank-$1$ projector is chosen for each hyperedge.

In a first work on the random $\kqsat$ model, Laumann et al. \cite{laumann2009phase} fully characterize the $k=2$ case
and show a threshold at density $\alpha^q_c=1/2$ using a transfer matrix approach introduced by Bravyi
\cite{bravyi2006efficient}. Curiously, the satisfying states in the satisfiable region are {\em product states}. They
also establish the first lower and upper bounds on a possible (conjectured \cite{bravyi2009bounds}) threshold. In a
recent breakthrough Bravyi, Moore and Russell \cite{bravyi2009bounds} have dramatically improved the upper bound to
$0.574 \cdot 2^k$, {\em below} the large $k$ limit of $\ln 2 \cdot 2^k \approx 0.69 \cdot 2^k$ for the classical
threshold!

Recently, Laumann et al. \cite{laumann09dimer} have given substantially improved lower bounds, essentially showing the
following.
\begin{theorem}\cite{laumann09dimer}\label{thm:dimer}
If there is a matching of projectors to qubits such that (i) each projector is matched to a qubit on which it acts
nontrivially and (ii) no qubit is matched to more than one projector, then the $\kqsat$ instance is satisfiable.
\end{theorem}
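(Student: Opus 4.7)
The plan is to exhibit an explicit satisfying \emph{product state} $\ket{\Psi} = \bigotimes_{q=1}^n \ket{\psi_q}$ by viewing the product-state ansatz as ranging over the variety $X := (\mathbb{P}^1)^n$, with $\ket{\psi_q} = \alpha_q\ket{0} + \beta_q\ket{1}$ parameterized by $[\alpha_q : \beta_q] \in \mathbb{P}^1$. Writing $\Pi_i = \ket{\phi_i}\bra{\phi_i}$ and expanding $\ket{\phi_i}$ in the computational basis, the frustration-free condition $\Pi_i\ket{\Psi} = 0$ becomes the vanishing of a multilinear polynomial $f_i$ of degree one in the coordinates $(\alpha_q, \beta_q)$ for each $q \in S_i$ and independent of the others. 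Since $\Pi_i \ne 0$ and product states span $(\mathbb{C}^2)^{\otimes k}$, $f_i$ is not identically zero, so its zero locus $V_i$ is a proper hypersurface in $X$, and the theorem reduces to showing $V_1 \cap \cdots \cap V_m \ne \emptyset$.

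The key step is a cohomology computation on $X$. The ring $H^*(X) = \mathbb{Z}[h_1, \ldots, h_n]/(h_1^2, \ldots, h_n^2)$ records degrees along each factor separately, and the multilinearity of $f_i$ gives $[V_i] = \sum_{q \in S_i} h_q$. Taking the product,
\begin{equation*}
[V_1]\cdots[V_m] \;=\; \prod_{i=1}^m \Big(\sum_{q \in S_i} h_q\Big) \;=\; \sum_{\substack{\sigma:[m]\to[n]\\ \sigma(i)\in S_i}} h_{\sigma(1)} h_{\sigma(2)} \cdots h_{\sigma(m)},
\end{equation*}
so the nilpotence relations $h_q^2 = 0$ annihilate every summand except those in which $\sigma$ is injective --- exactly the valid matchings of $\Pi_i$ to a distinct qubit $q_i \in S_i$. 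The theorem's hypothesis supplies such a $\sigma$, so $[V_1]\cdots[V_m]$ is a nonzero effective class in $H^*(X)$.

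Finally, because each $V_i$ is an effective divisor on the smooth projective variety $X$, a nonvanishing intersection class forces a nonempty set-theoretic intersection: pad by generic hyperplanes $H_j$ (one for each qubit $j \notin \sigma([m])$) so that $[V_1]\cdots[V_m]\cdot[H_1]\cdots[H_{n-m}]$ contains the top monomial $h_1\cdots h_n$ with positive coefficient; the resulting zero-dimensional cycle has positive degree on $X$ and is therefore nonempty, yielding a point in $V_1 \cap \cdots \cap V_m$, i.e., a satisfying product state. The step I expect to require the most care is this final positivity argument, since it relies on intersection theory rather than on a purely combinatorial move. As a fallback for a non-specialist audience I would aim for an elementary inductive proof along the matching: decompose $f_i = \alpha_{\sigma(i)} g_i^0 + \beta_{\sigma(i)} g_i^1$ and argue by induction on $m$ that a nonzero joint solution exists, using multilinearity in each matched pair $(\alpha_{\sigma(i)}, \beta_{\sigma(i)})$ while carefully tracking which residual constraints still involve the remaining unknowns.
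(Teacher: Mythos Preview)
The present paper does not give its own proof of this theorem; it is quoted from Laumann et al.\ \cite{laumann09dimer} and used as a black box in Section~\ref{sec:random}. Your proposal is correct, and it is in fact the argument of \cite{laumann09dimer}: view product states as points of $(\mathbb{P}^1)^n$, each rank-$1$ constraint $\Pi_i$ as the zero locus of a multilinear section of $\bigotimes_{q\in S_i}p_q^*\mathcal{O}(1)$ with class $\sum_{q\in S_i}h_q$, and expand the product in the cohomology ring $\mathbb{Z}[h_1,\ldots,h_n]/(h_q^2)$ to see that $[V_1]\cdots[V_m]$ is the sum over injective systems of distinct representatives, hence nonzero exactly when a matching exists.

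The step you flagged as delicate is indeed fine and can be phrased without the padding. Each $\mathcal{L}_i=\bigotimes_{q\in S_i}p_q^*\mathcal{O}(1)$ is globally generated, so $(f_1,\ldots,f_m)$ is a section of the rank-$m$ bundle $E=\bigoplus_i\mathcal{L}_i$; a nowhere-zero section would force the Euler class $e(E)=c_m(E)=\prod_i c_1(\mathcal{L}_i)=[V_1]\cdots[V_m]$ to vanish, contradiction. Alternatively, for \emph{generic} sections the intersection is transverse of the computed positive degree, and the locus of section tuples with nonempty common zero set is closed in $\prod_i\mathbb{P}H^0(X,\mathcal{L}_i)$ (image of a projective variety under projection), hence is everything. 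Your fallback induction along the matching can be made to work but is genuinely messier, precisely because constraints other than $\Pi_i$ may also involve the matched qubit $\sigma(i)$; the cohomological argument sidesteps that bookkeeping entirely.
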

Such a matching exists with high probability for random instances of $\qsat$ if the density is below some critical
value $c(k)$ (hence $c(k) \leq \alpha^q_c$), with $c(3)\approx 0.92$ and $c(k) \rightarrow 1$ for large $k$.

There remained a distressingly large gap between the best rigorous lower ($<1$) and upper ($\approx 0.574 \cdot 2^k$)
bounds for a satisfiable/non-satisfiable threshold of random $\kqsat$.

Using our quantum LLL we are able to dramatically improve the lower bound on such a threshold. To get a better
intuition on the kind of bounds the quantum LLL can give in this setting, let us first look at a simple toy example:
random $\kqsat$ instances picked according to the uniform distribution on {\em $D$-regular} $k$-hypergraphs $G_k(n,D)$
(so $m=Dn/k$ and their density is $\alpha=D/k$). It is easy to see that a matching as assumed in Thm.~\ref{thm:dimer}
only exists iff $k\geq D$, so this technique shows satisfiability only below density $1$. Our
Cor.~\ref{cor:qsat_satisfiable}, on the other hand, immediately implies that the instance is satisfiable as long the
density $\alpha \leq 2^k/(e \cdot k^2)$. It is this order of magnitude that we manage to achieve also in the random
$\kqsat$ model described above. We show

\begin{theorem}\label{thm:threshold}
A random $\kqsat$ instance of density $\alpha \leq  2^k/(12 \cdot e \cdot k^2)$ is satisfiable with high probability
for any $k\geq 1$. Hence $\alpha_c^q \geq 2^k/(12 \cdot e \cdot k^2)$.
\end{theorem}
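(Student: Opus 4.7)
The plan is to apply Corollary~\ref{cor:qsat_satisfiable} to a random instance of $\kqsat$, but the naive application fails: although the mean qubit-degree $k\alpha = 2^k/(12 e k)$ sits a factor of $12$ below the QLLL threshold $2^k/(ek)$, with positive probability some qubit will have degree above the threshold. The factor $1/12$ of slack in the density bound is precisely what allows a hybrid argument combining Theorem~\ref{thm:dimer} (matching-based satisfiability) with our QLLL to cover these rare ``bad'' qubits.

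The first step is to separate the qubits into a ``heavy'' set $H$, consisting of qubits incident to more than $2^k/(ek)$ projectors, and a complementary ``light'' set. Using Chernoff-type concentration on the degree distribution in $G_k(n,m)$, one shows that with high probability $|H|$ and the number of projectors incident to $H$ are much smaller than $n$, and moreover that Hall's condition holds on the bipartite graph between the set of projectors touching $H$ and the qubits of $H$. A matching obtained from Hall's theorem pairs each projector incident to a heavy qubit with one of its heavy qubits, so by Theorem~\ref{thm:dimer} the ``heavy'' sub-instance consisting of all such projectors is satisfiable.

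The second step handles the ``light'' sub-instance consisting of projectors that avoid $H$ entirely. By construction every qubit these projectors touch has degree at most $2^k/(ek)$, so Corollary~\ref{cor:qsat_satisfiable} applies directly and gives a joint satisfying subspace of positive dimension. Finally, one combines the two partial solutions: since the heavy and light sub-systems interact only through the matched qubits, which are pinned by the dimer argument, the two satisfying subspaces are R-independent in the sense of Definition~\ref{def:R}, and their intersection has positive relative dimension, yielding a global satisfying state with high probability.

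The main obstacle is the concentration/Hall step. For small $k$ with $n\to\infty$, single-qubit tail bounds are too weak for a union bound over all $n$ qubits, so one must exploit the local sparsity of $G_k(n,m)$ (e.g.\ bounding the expected number of heavy qubits in any small neighbourhood and invoking a local Hall-type argument), or alternatively route the whole argument through the asymmetric version of the QLLL proved in Section~\ref{sec:qlll} to absorb the few heavy qubits without matching them out first. A secondary difficulty is to make the partition of projectors into matched and QLLL-handled groups compatible with the R-independence required to glue the two subspaces together; assigning every projector with at least one heavy qubit to the matching side, and invoking Theorem~\ref{thm:dimer} on that entire side at once, is the cleanest way to avoid overlap between the two arguments.
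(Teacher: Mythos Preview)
Your outline captures the high-level two-part strategy (a matching argument on a ``heavy'' piece plus QLLL on the rest), but both the partition and the gluing you describe have genuine gaps that the paper's proof specifically works around.

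\textbf{The matching step as stated fails.} You propose to match each projector incident to a heavy qubit to \emph{one of its heavy qubits}. But heavy qubits are exactly those with degree exceeding $2^k/(ek)$, so a single heavy qubit may be the only heavy qubit of each of the many projectors touching it; Hall's condition on the bipartite graph (projectors touching $H$) vs.\ (qubits in $H$) then fails outright. The paper instead defines $V_{\H}$ by an \emph{iterative closure}: start with high-degree vertices, then repeatedly absorb every edge having two or more vertices already in the current set, together with the remaining vertices of that edge. Only after this closure does one take $\H$ to be the edges fully inside $V_{\H}$, and the matching is sought between $\H$ and $V_{\H}$ (not between all projectors touching $H$ and $H$). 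The whole point of Lemmas~\ref{le:small_has_matching} and~\ref{lem:H_is_small} is to show that $V_{\H}$ stays small enough that the induced hypergraph on it has a matching.

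\textbf{The gluing step is not R-independence.} Your two sub-instances overlap on qubits: the ``heavy'' projectors (those touching $H$) typically have several light qubits, so the heavy satisfying state constrains light qubits that the light sub-instance also uses. There is no tensor decomposition here, and saying the two satisfying subspaces are R-independent is not justified (nor would it be enough). The paper handles this with the Gluing Lemma~\ref{lem:glue}: the iterative construction guarantees that every projector in $\LL$ has \emph{at most one} qubit in $V_{\H}$. For such a projector one writes $\ket{v}=a_0\ket{0}\ket{v^0}+a_1\ket{1}\ket{v^1}$ on that one $V_{\H}$-qubit and replaces it by the rank-$2$ projector onto $\mathrm{span}\{\ket{v^0},\ket{v^1}\}$ acting only on the remaining $k-1$ light qubits. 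This genuinely decouples the two pieces (hence a tensor product of the two partial states works), and is why the QLLL is invoked via Corollary~\ref{cor:gen-sat-qlll} with rank $\leq 4$ and the tighter degree bound $2^k/(4ek)$, not Corollary~\ref{cor:qsat_satisfiable}. Your proposal is missing both the ``at most one'' structural guarantee and the rank-$2$ replacement trick; without them there is no clean way to combine the two partial solutions.
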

All previous lower bound proofs \cite{laumann2009phase,laumann09dimer} were based on constructing {\em tensor product
states} which satisfy all constraints. In fact it is conjectured \cite{laumann09dimer} that $c(k)$ is the critical
density above which entangled states would necessarily appear as satisfying states. To our knowledge no technique has
allowed to deal with entangled satisfying states in this setting. Using the quantum LLL allows us to show the existence
of a satisfying state without the need to generate it, and in particular the satisfying state need not be a product
state (and probably is not). We conjecture that the improvement in our bound, which is roughly exponential in $k$, is
due to this difference.

The main difficulty we encounter in the proof of Thm.~\ref{thm:threshold} (see Sec.~\ref{sec:random}) is that even
though the {\em average degree} in $G_k(n,m=\alpha n)$ is of the right order of magnitude ($\approx 2^k/k$) to apply
the quantum LLL (Cor.~\ref{cor:qsat_satisfiable}), the maximum degree can deviate vastly from it (its expected size is
roughly logarithmic in $n$), and hence prevent a direct application of the quantum LLL. The key insight is that we can
split the graph into two parts, one essentially consisting of {\em high degree} vertices that deviate by too much from
the average degree and the other part containing the remaining vertices. We then show that the first part obeys the
matching conditions of Thm.~\ref{thm:dimer} \cite{laumann09dimer} and hence has a satisfying state, and the second part
obeys the maximum degree requirements of the quantum LLL and is hence also satisfiable. The challenge is to ``glue"
these two satisfying solutions together. For this we need to make sure that each edge in the second part intersects the
first part in at most one qubit (by adding all other edges to the first part, while carefully treating the resulting
dependencies). We can then create a new $(k-1)$-local projector of rank $2$ for each intersecting edge, which reflects
the fact that one qubit of this edge is already ``taken". This allows to effectively decouple the two parts.

\paragraph{\bf Discussion and Open Problems:} We have shown a general quantum LLL. An obvious open question is whether it has more
applications for quantum information.

We call our generalization of the Lov\'{a}sz Local Lemma ``quantum" in view of the applications we have given. However,
stricto sensu there is nothing {\em quantum} in our version of the LLL; It is a statement about subspaces and the
dimensions of their intersections. As such it seems to be very versatile and we hope that it will find a multitude of
other applications, not only in quantum information, but also in geometry or linear algebra. More generally, our LLL
holds for any set of objects with a valuation $\RR$ and operations $\bigcap$ and $+$ that obey properties (i)-(iv) (see
Lemma~\ref{lem:propR}) and might be applicable even more generally. Since the LLL has so many applications, we hope
that our ``geometric" LLL becomes equally useful.

The standard proof of the classical LLL is {\em non-constructive} in the sense that it asserts the existence of an
object that obeys a system of constraints with limited dependence, but does not yield an efficient procedure for {\em
finding} an object with the desired property. In particular, it does not provide an efficient way to find the actual
satisfying assignment in Cor.~\ref{cor:lll_for_sat}. A long line of research
\cite{beck_algorithmic_1991,alon_parallel_1991,molloy_further_1998,czumaj_coloring_2000,srinivasan_improved_2008,moser_derandomizinglovasz_2008}
has culminated in a very recent breakthrough result by Moser \cite{moser2009constructiveSTOC} (see also
\cite{moser2009constructive}), who gave an {\em algorithmic} proof of the LLL that allows to efficiently {\em
construct} the desired satisfying assignment (and more generally the object whose existence is asserted by the LLL
\cite{moser2009constructive}). Moser's algorithm itself is a rather simple random walk on assignments; an innovative
information theoretic argument proves its correctness (see also \cite{fortnow_kolmogorov}). This opens the exciting
possibility to draw an analogy for a (possibly quantum) algorithm to construct the satisfying state in instances of
$\qsat$ which are known to be satisfiable via our QLLL, and we hope to explore this connection in future work.

\paragraph{Structure of the paper:} In Sec.~\ref{sec:prop} we study properties of relative dimension $\RR$ and
of R-independence, allowing us to prove a general QLLL in Sec.~\ref{sec:qlll}. Sec.~\ref{sec:random} extends our
results to the random $\kqsat$ model and presents our improved bound on the size of the satisfiable region.

\section{Properties of Relative Dimension}\label{sec:prop}

Here we summarize and prove some of the properties of the {\em relative dimension $R$} and of {\em R-independence} as
defined in Def.~\ref{def:R}, which will be useful in the proof of the quantum LLL in the next section.

\begin{lemma}\label{lem:propR} For any subspaces $X,Y, Z, X_i \subseteq V$ the following hold
\begin{enumerate}[(i)]
\item $0 \leq \RR(X) \leq 1$.
 \item Monotonicity: $X \subseteq Y \rightarrow \RR(X) \leq \RR(Y)$.
 \item Chain Rule:\\ $\RR(\bigcap_{i=1}^n X_i|Y)=\RR(X_1|Y) \cdot \RR(X_2|X_1 \cap Y )
 \cdot \RR(X_3|X_1 \cap X_2 \cap Y) \cdot \ldots \cdot \RR(X_n|\bigcap_{i=1}^{n-1}X_i \cap Y)$.
 \item Inclusion/Exclusion: $\RR(X)+\RR(Y) = \RR(X+Y) + \RR(X \cap Y)$.
 \item $\RR(X)+\RR(X^\perp)=1$ and $\RR(X|Y)+\RR(X^\perp|Y)\leq 1$.
 \item $\RR(X|Z)+\RR(Y|Z)-\RR(X \cap Y|Z) \leq 1$.
\end{enumerate}
\end{lemma}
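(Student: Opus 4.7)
The plan is to prove all six parts directly from the two basic facts that drive the whole lemma: the Grassmann dimension formula $\dim(A)+\dim(B)=\dim(A+B)+\dim(A\cap B)$, and the orthogonal decomposition identity $\dim(A)+\dim(A^\perp)=\dim(V)$. Every statement then reduces to a short dimension-counting argument divided through by $\dim V$ or $\dim(Y)$.

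First I would dispose of (i) and (ii) in one line each: $X\subseteq V$ gives $0\leq\dim X\leq\dim V$, and $X\subseteq Y$ gives $\dim X\leq\dim Y$, both by standard linear algebra. Part (iv) is literally the Grassmann formula divided by $\dim V$. Part (iii) follows by unfolding the definition $\RR(A|B)=\dim(A\cap B)/\dim(B)$ and observing that the product telescopes: each numerator $\dim\bigl(X_i\cap\bigcap_{j<i}X_j\cap Y\bigr)$ cancels the denominator of the next factor, leaving $\dim(\bigcap_{i=1}^n X_i\cap Y)/\dim(Y)=\RR(\bigcap_i X_i\mid Y)$. One has to be a bit careful to assume all the conditioning subspaces are nonzero so that the quotients are defined, but this is the usual convention.

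The first half of (v) is immediate from $\dim X+\dim X^\perp=\dim V$. The conditional inequality is the first genuinely quantum-flavoured step: I would observe that $X\cap Y$ and $X^\perp\cap Y$ are orthogonal subspaces of $Y$, so their sum is direct and still contained in $Y$, giving
\[
\dim(X\cap Y)+\dim(X^\perp\cap Y)=\dim\bigl((X\cap Y)+(X^\perp\cap Y)\bigr)\leq\dim(Y),
\]
and dividing by $\dim Y$ yields $\RR(X|Y)+\RR(X^\perp|Y)\leq 1$. This inequality is strict precisely when $Y$ is not spanned by its projections onto $X$ and $X^\perp$, which is the phenomenon flagged in the introduction as the main deviation from classical probability.

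For (vi) the idea is to apply Grassmann inside $Z$: set $A=X\cap Z$ and $B=Y\cap Z$, and note $A\cap B=X\cap Y\cap Z$. Then
\[
\dim(X\cap Z)+\dim(Y\cap Z)-\dim(X\cap Y\cap Z)=\dim\bigl((X\cap Z)+(Y\cap Z)\bigr)\leq\dim(Z),
\]
since $(X\cap Z)+(Y\cap Z)\subseteq Z$. Dividing by $\dim Z$ gives exactly (vi). The only thing to watch is that none of these manipulations require the orthogonality used in (v); the potential obstacle across the whole lemma is really just (v), because that is the one place where the bound is an inequality rather than an equality, and one must resist the temptation to assume the classical identity $\RR(X|Y)+\RR(X^\perp|Y)=1$, which as noted after Definition~\ref{def:R} genuinely fails.
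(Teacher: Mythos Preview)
Your proposal is correct and follows essentially the same approach as the paper. The paper is terser---it declares (i), (ii), (iii), and (v) ``trivial from the definition'' and proves (vi) exactly as you do, by applying the Grassmann formula (iv) to $X\cap Z$ and $Y\cap Z$ and then using monotonicity with $(X\cap Z)+(Y\cap Z)\subseteq Z$---but your more explicit telescoping for (iii) and the orthogonality argument for the conditional half of (v) simply fill in the details the paper leaves to the reader.
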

\begin{proof}
Properties (i), (ii), (iii) and (v) follow trivially from the definition.

Property (iv) follows from $\dim(X)+\dim(Y)=\dim(X \cap Y) + \dim (X + Y)$, which is an easy to prove statement about
vector spaces (see e.g. \cite{kostrikin_linear_1997}, Thm. 5.3).

Property (vi) follows from (ii) and (iv): Inclusion/exclusion (iv) gives $\RR(X \cap Z)+\RR(Y \cap Z) =\RR(X \cap Z + Y
\cap Z)+\RR(X \cap Y \cap Z)\leq \RR(Z)+\RR(X \cap Y \cap Z)$, where the last inequality follows from the monotonicity
property (ii) using $X \cap Z + Y \cap Z \subseteq Z$. Dividing by $\RR(Z)$ gives the desired result.
\end{proof}

We also need to extend our definition of R-independence (Def.~\ref{def:R}) to the case of several subspaces, in analogy
to the case of events.

\begin{definition}[Mutual independence] An event $A$ (resp. subspace $X$) is mutually independent (resp. mutually R-independent)
of a set of events (resp. subspaces) $\{Y_1, \ldots, Y_\ell\}$ if for all $ S \subseteq [\ell]$,
$\Pr(A|\bigwedge_{i=1}^\ell Y_i)=\Pr(A)$ (resp. $\RR(X|\bigcap_{i=1}^\ell Y_i)=\RR(X)$).
\end{definition}

Note that unlike in the case of probabilities, it is possible that two subspaces $A$ and $B$ are mutually R-independent
but $A^c$ and $B$ are not mutually R-independent. One example for this are the following subspaces of $\mathbb{R}^4$:
$A=\textrm{span}(\{(1,0,0,0),(0,1,0,0)\}$ and $B=\textrm{span}(\{(1,0,0,0),(0,1,1,0)\}$. We have $\RR(A|B)=\RR(A)=1/2$
but $\RR(A^\perp|B)=0$ while $\RR(A^\perp)=1/2$.

Let us now relate the notion of mutual R-independence to the situation in $\kqsat$ instances. We first associate a
subspace with a projector, in the natural way.

\begin{definition}[Projectors and associated subspace]
A $k$-local projector on $n$-qubits is a projector of the form $\pi \otimes I_{n-k}$, where $\pi$ is a projector on $k$
qubits $q_1, \ldots , q_k$ and $I_{n-k}$ is the identity on the remaining qubits. We say that $\Pi$ acts on
$q_1,\ldots, q_k$. For a projector $\Pi$, let its satisfying space be $X_{\Pi^\perp}:=\ker \Pi=\{\ket{\Psi}\,|\, \Pi
\ket{\Psi}=0\}$. When there is no risk of confusion we denote $X_{\Pi^\perp}$ by $\Pi^\perp$ and its complement by
$\Pi$.
\end{definition}

Recall that in statements like Cor.~\ref{cor:qsat_satisfiable} we would like to say that two projectors are mutually
R-independent if they do not share any qubits. This is indeed the case, as the following lemma shows.

\begin{lemma}\label{lem:tensor-indep}
Assume a projector $\Pi$ does not share any qubits with projectors $\Pi_{1},\ldots, \Pi_{\ell}$. Then $X_{\Pi^\perp}$
is mutually R-independent of $\{X_{\Pi_1^\perp}, \ldots , X_{\Pi_\ell^\perp}\}$.
\end{lemma}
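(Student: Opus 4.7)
The plan is to exploit the tensor-product structure that the disjointness hypothesis gives to the satisfying spaces, and then reduce R-independence to a direct dimension computation.

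First I would set up notation. Let $Q=\{q_1,\ldots,q_k\}$ be the qubits on which $\Pi$ acts nontrivially and let $Q^c$ be the remaining qubits, so that $\H=\H_Q\otimes\H_{Q^c}$. By assumption each $\Pi_i$ acts trivially on $Q$, so we may write $\Pi=\pi\otimes I_{Q^c}$ and $\Pi_i=I_Q\otimes\pi_i'$ where $\pi_i'$ is a projector on $\H_{Q^c}$. A direct check (any vector decomposed in a product basis) yields
\[
X_{\Pi^\perp}=\ker(\pi)\otimes\H_{Q^c},\qquad X_{\Pi_i^\perp}=\H_Q\otimes\ker(\pi_i').
\]

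The key linear-algebra fact I would invoke is the following identity for subspaces $A_1,A_2\subseteq\H_1$ and $B_1,B_2\subseteq\H_2$:
\[
(A_1\otimes B_1)\cap(A_2\otimes B_2)=(A_1\cap A_2)\otimes(B_1\cap B_2).
\]
The inclusion $\supseteq$ is immediate; for $\subseteq$ one expands a vector in the intersection in a basis adapted to $A_1$ and to $B_2$ (or more cleanly, uses that the projector onto $A\otimes B$ factorizes as $P_A\otimes P_B$, so the product of the two projectors is again a projector onto a tensor product). Applying this identity repeatedly (induction on $|S|$) to the subspaces above, I obtain for any $S\subseteq[\ell]$
\[
\bigcap_{i\in S}X_{\Pi_i^\perp}=\H_Q\otimes K_S,\qquad X_{\Pi^\perp}\cap\bigcap_{i\in S}X_{\Pi_i^\perp}=\ker(\pi)\otimes K_S,
\]
where $K_S:=\bigcap_{i\in S}\ker(\pi_i')\subseteq\H_{Q^c}$.

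Assuming $K_S\neq\{0\}$ (otherwise the conditioning is degenerate and the statement holds vacuously with any consistent convention), I compute
\[
\RR\!\left(X_{\Pi^\perp}\,\Big|\,\bigcap_{i\in S}X_{\Pi_i^\perp}\right)=\frac{\dim\ker(\pi)\cdot\dim K_S}{\dim\H_Q\cdot\dim K_S}=\frac{\dim\ker(\pi)}{\dim\H_Q}=\frac{\dim\ker(\pi)\cdot\dim\H_{Q^c}}{\dim\H_Q\cdot\dim\H_{Q^c}}=\RR(X_{\Pi^\perp}),
\]
which is exactly the defining condition for mutual R-independence.

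The only real obstacle is the tensor-intersection identity; once that is in hand the rest is a one-line dimension count. I would therefore prove that identity carefully (either via projector factorization or a product-basis argument), and then the conclusion follows mechanically.
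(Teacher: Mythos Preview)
Your proposal is correct and follows essentially the same route as the paper: split the Hilbert space as $\H_Q\otimes\H_{Q^c}$, observe that the satisfying spaces factor accordingly, and reduce the conditional relative dimension to a ratio in which $\dim K_S$ cancels. The paper's proof is in fact terser---it uses the tensor-intersection identity $(A_1\otimes B_1)\cap(A_2\otimes B_2)=(A_1\cap A_2)\otimes(B_1\cap B_2)$ implicitly without stating or justifying it---so your explicit treatment of that step (and of the degenerate case $K_S=\{0\}$) only makes the argument more complete.
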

\begin{proof}
Let us split the Hilbert space $\mathcal{H}$ of the entire system into $\mathcal{H}=\mathcal{H}_1\otimes
\mathcal{H}_2$, where $\mathcal{H}_1$ is the space  which consists of the qubits $\Pi$ acts on non-trivially (and
$\Pi_1,\ldots , \Pi_{\ell}$ act as identity) and the remaining space $\mathcal{H}_2$. By assumption there are
projectors $\pi$ and $\pi_1,\ldots,\pi_\ell$ such that $\Pi=\pi \otimes I_{n-k}$ and $\Pi_i=I_k \otimes \pi_i$. For
every $S \subseteq [\ell]$,

\begin{equation*}
\RR(\Pi |\bigcap_{i \in S} \Pi_{i}) = \frac{\dim(\Pi \bigcap_{i\in S} \Pi_{i})}{\dim(\bigcap_{i \in S} \Pi_{i}} =
\frac{\dim(\pi \tensor \bigcap_{i\in S} \pi_{i})}{\dim( I \tensor \bigcap_{i \in S} \pi_{i})} =
\frac{\dim(\pi)\dim(\bigcap_{i \in S} \pi_{i} ) }{\dim(\mathcal{H}_{1})\dim(\bigcap_{i \in S} \pi_{i})} = \RR(\Pi) .
\end{equation*}
\end{proof}
\begin{remark}
In exactly the same way one can show that $\Pi$ is mutually R-independent of $\{\Pi_1^\perp,\ldots ,\Pi_\ell^\perp\}$
and that both $\Pi$ and $\Pi^\perp$ are mutually R-independent of $\{\Pi_1,\ldots ,\Pi_\ell\}$. Hence the property of
not sharing qubits (or, for subspaces, having a certain tensor structure), which in particular implies mutual
R-independence, is in some sense a stronger notion of independence than R-independence. To prove our quantum LLL we
only require the weaker notion of R-independence, which potentially makes the quantum LLL more versatile and applicable
in settings where there is no tensor structure.
\end{remark}

\section{The Quantum Local Lemma}
\label{sec:qlll}

We begin by stating the classical general Lov\'{a}sz Local Lemma. To this end we need to be more precise about what we
mean by ``weak" dependence, introducing the notion of the {\em dependency graph} for both events and subspaces (see
e.g. \cite{alon2004probabilistic} for the case of events), where we use {\em relative dimension} R as in
Def.~\ref{def:R}.

\begin{definition}[Dependency graph for events/subspaces]
\label{def:events_dependency_graph} The directed graph $G=([n],E)$ is a dependency graph for \\(i) the events
$A_1,\ldots,A_n$ if for every $i \in [n]$, $A_i$  is mutually independent of $\{A_j| (i,j) \notin E \}$,\\
(ii) the subspaces $X_1, \ldots, X_n$ if for every $i \in [n]$, $X_i$ is mutually R-independent of $\{X_j| (i,j) \notin
E \}$.
\end{definition}
With these notions in place we can state the general Lov\'{a}sz Local Lemma (sometimes also called the asymmetric LLL).
\begin{theorem}[\cite{erdos1975problems}]
\label{th:asymmetric_lll} Let $A_1,A_2,\ldots,A_n$ be events with dependency graph $G=([n],E)$. If there exists $0 \leq
y_1,\ldots , y_n < 1$, such that  $Pr(A_i)\leq y_i \cdot \prod_{(i,j)\in E} (1-y_j)$, then
\[ \pr(\bigwedge_{i=1}^n A_i^{c}) \geq \prod_{i=1}^n (1- y_i).\]
\end{theorem}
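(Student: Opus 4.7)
The plan is to follow the classical inductive proof of the asymmetric LLL, whose engine is the identity $\pr(A^c \mid B) = 1 - \pr(A \mid B)$ together with mutual independence. I would establish the following induction hypothesis: for every $i \in [n]$ and every $S \subseteq [n] \setminus \{i\}$,
\[
\pr\!\left(A_i \,\Big|\, \bigwedge_{j \in S} A_j^c\right) \leq y_i.
\]
The induction runs on $|S|$. The base case $|S|=0$ is immediate from the assumption $\pr(A_i) \leq y_i \prod_{(i,j)\in E}(1-y_j) \leq y_i$.

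For the inductive step I split the conditioning set $S = S_1 \cup S_2$, where $S_1 = \{j \in S : (i,j) \in E\}$ consists of the neighbors of $i$ in the dependency graph and $S_2 = S \setminus S_1$. I then rewrite
\[
\pr\!\left(A_i \,\Big|\, \bigwedge_{j \in S} A_j^c\right) = \frac{\pr\!\left(A_i \wedge \bigwedge_{j \in S_1} A_j^c \,\Big|\, \bigwedge_{j \in S_2} A_j^c\right)}{\pr\!\left(\bigwedge_{j \in S_1} A_j^c \,\Big|\, \bigwedge_{j \in S_2} A_j^c\right)}.
\]
For the numerator I invoke mutual independence of $A_i$ from $\{A_j : (i,j) \notin E\}$ to drop all conditioning once $A_i$ is isolated, yielding the upper bound $\pr(A_i) \leq y_i \prod_{(i,j)\in E}(1-y_j)$. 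For the denominator I apply the chain rule to expand it as a product of factors $\pr(A_{j_\ell}^c \mid \text{strictly smaller conditioning set})$; each such factor is at least $1 - y_{j_\ell}$ by the induction hypothesis (using the complement identity), so the product is bounded below by $\prod_{j \in S_1}(1-y_j) \geq \prod_{(i,j)\in E}(1-y_j)$. Taking the ratio proves the inductive step.

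To conclude the theorem I apply the chain rule once globally,
\[
\pr\!\left(\bigwedge_{i=1}^n A_i^c\right) = \prod_{i=1}^n \pr\!\left(A_i^c \,\Big|\, \bigwedge_{j<i} A_j^c\right) \geq \prod_{i=1}^n (1-y_i),
\]
where the last inequality combines the just-established bound with $\pr(A_i^c \mid \cdot) = 1 - \pr(A_i \mid \cdot)$.

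Looking forward to the quantum LLL in the next section, the main obstacle will be precisely this: the complement identity $\pr(A^c \mid B) = 1 - \pr(A \mid B)$ is used twice, in the denominator estimate and in the final chain-rule step, and its analog $\RR(X^\perp \mid Y) = 1 - \RR(X \mid Y)$ fails in general, as the authors already flagged. The quantum argument will therefore need to work directly with the ``good'' subspaces $X_i$ and with ratios $\RR(\bigcap_{i \in S'} X_i)/\RR(\bigcap_{i \in S} X_i)$ for nested sets $S \subseteq S'$, rather than reasoning through orthogonal complements, and it will have to rely on the conditional-form inequalities (v) and (vi) of Lemma~\ref{lem:propR} in place of the missing identity.
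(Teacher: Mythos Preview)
Your proof is correct; it is exactly the standard Alon--Spencer argument for the classical asymmetric LLL, and everything you wrote goes through for events and genuine probability.

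However, the paper does not prove Theorem~\ref{th:asymmetric_lll} this way. It proves the quantum version, Theorem~\ref{th:asymmetric_qlll}, by a modified argument that \emph{simultaneously} establishes the classical statement as a special case (as noted just after the theorem: ``Our proof uses properties that hold both for $\Pr$ and $\RR$, in particular we also prove Thm.~\ref{th:asymmetric_lll}''). The modification is precisely the one you anticipate in your final paragraph: instead of proving $\pr(A_i\mid\bigwedge_{j\in S}A_j^c)\leq y_i$ and invoking the complement identity, the paper inducts directly on the ``good'' statement $\RR(X_i\mid\bigcap_{j\in S}X_j)\geq 1-y_i$. The numerator bound is obtained not by dropping the $\bigwedge_{j\in S_1}A_j^c$ factor and using independence on $A_i$ alone, but by the inclusion--exclusion inequality (vi) of Lemma~\ref{lem:propR}, which yields $\RR(\mathcal{X}_D\mid\mathcal{X}_I)-\RR(X_i\cap\mathcal{X}_D\mid\mathcal{X}_I)\leq 1-\RR(X_i\mid\mathcal{X}_I)$; the denominator bound uses the chain rule and induction hypothesis applied directly to the $X_j$, never passing through complements. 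Your approach is the more familiar one and is perfectly adequate for the classical theorem in isolation; the paper's approach buys a single proof that covers both the probabilistic and the relative-dimension settings, which is the point of the section.
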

In particular, with positive probability no event $A_i$ holds.

We prove a quantum generalization of this lemma with exactly the same parameters. As mentioned before, we have to
modify the formulation of the LLL to account for the unusual way R-independence behaves under complement. We are now
ready to state and prove our main result.

\begin{theorem}[Quantum Lov\'{a}sz Local Lemma]
\label{th:asymmetric_qlll} Let $X_1,X_2,\ldots,X_n$ be subspaces with dependency graph $G=([n],E)$. If there exist $0
\leq y_1,\ldots , y_n < 1$, such that
\begin{equation}
\label{eq:given_non_symmetric} \RR(X_i)\geq 1-y_i \prod_{(i,j) \in E} (1-y_j),
\end{equation}
then $ \RR(\bigcap_{i=1}^n X_i) \geq \prod_{i=1}^n (1-y_i)$.
\end{theorem}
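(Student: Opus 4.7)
The plan is to mimic the classical induction, working directly with the ``good'' subspaces $X_i$ rather than their complements, and to replace the classical identity $\Pr(A|B)+\Pr(A^c|B)=1$ (which is only an inequality for $\RR$ by Lemma~\ref{lem:propR}(v)) with a sharp use of inclusion/exclusion combined with mutual R-independence.

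The central inductive claim is that for every $i\in[n]$ and every $S\subseteq[n]\setminus\{i\}$,
\[
\RR\Bigl(X_i \,\Big|\, \bigcap_{j\in S}X_j\Bigr) \;\geq\; 1-y_i.
\]
Granted this, the theorem follows at once from the chain rule (Lemma~\ref{lem:propR}(iii)) applied to any ordering of the indices: $\RR(\bigcap_{i=1}^n X_i)=\prod_{i=1}^n \RR(X_i\mid\bigcap_{j<i}X_j)\geq\prod_{i=1}^n(1-y_i)$.

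I would prove the claim by strong induction on $|S|$. The base case $S=\emptyset$ reduces to $\RR(X_i)\geq 1-y_i$, which follows from the hypothesis~(\ref{eq:given_non_symmetric}) since $\prod_{(i,j)\in E}(1-y_j)\leq 1$. For the inductive step, split $S=S_1\sqcup S_2$, where $S_1=\{j\in S:(i,j)\in E\}$ are the neighbors of $i$ in $S$ and $S_2$ are the non-neighbors, and set $Y=\bigcap_{j\in S_1}X_j$ and $Z=\bigcap_{j\in S_2}X_j$. The chain rule gives
\[
\RR(X_i\mid Y\cap Z) \;=\; \frac{\RR(X_i\cap Y\mid Z)}{\RR(Y\mid Z)}.
\]
For the numerator, Lemma~\ref{lem:propR}(vi) applied with conditioning $Z$ yields $\RR(X_i\cap Y\mid Z)\geq \RR(X_i\mid Z)+\RR(Y\mid Z)-1$. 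Because $Z$ is an intersection over subspaces indexed by non-neighbors of $i$, mutual R-independence gives $\RR(X_i\mid Z)=\RR(X_i)$, and then $1-\RR(X_i\mid Z)=\RR(X_i^\perp)\leq y_i\prod_{(i,j)\in E}(1-y_j)$ by the hypothesis together with the unconditional equality in Lemma~\ref{lem:propR}(v). For the denominator, another chain-rule expansion of $\RR(Y\mid Z)$ and the induction hypothesis applied to each factor (each factor conditions on a set strictly smaller than $S$) give $\RR(Y\mid Z)\geq \prod_{j\in S_1}(1-y_j)$. Assembling these bounds,
\[
\RR(X_i\mid Y\cap Z) \;\geq\; 1-\frac{1-\RR(X_i\mid Z)}{\RR(Y\mid Z)} \;\geq\; 1-\frac{y_i\prod_{(i,j)\in E}(1-y_j)}{\prod_{j\in S_1}(1-y_j)} \;\geq\; 1-y_i,
\]
where the last inequality uses $S_1\subseteq\{j:(i,j)\in E\}$ so the ratio of products is at most one, closing the induction.

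The main obstacle is precisely the one flagged in the introduction: conditionally on an arbitrary $W$, one only has the inequality $\RR(X_i\mid W)+\RR(X_i^\perp\mid W)\leq 1$, not equality, so the standard trick of flipping between good and bad events inside a conditioning set is unavailable. The observation that rescues the argument is that mutual R-independence lets us strip the conditioning on the non-neighbor part $Z$ \emph{before} passing to the complement, so the only use of complementation $1-\RR(X_i)=\RR(X_i^\perp)$ happens unconditionally, where equality does hold; combined with Lemma~\ref{lem:propR}(vi) to control the numerator, this fully sidesteps the asymmetric behavior of $\RR$ under orthogonal complement.
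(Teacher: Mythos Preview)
Your proposal is correct and follows essentially the same route as the paper's proof: the same inductive claim, the same neighbor/non-neighbor split of $S$, the same use of Lemma~\ref{lem:propR}(vi) to bound the numerator and of the chain rule plus induction to bound the denominator, and the same observation that R-independence lets one replace $\RR(X_i\mid Z)$ by $\RR(X_i)$ so that only the unconditional identity $\RR(X_i)+\RR(X_i^\perp)=1$ is needed. The only cosmetic difference is that the paper writes the computation as an upper bound on $1-\RR(X_i\mid Y\cap Z)$ rather than a lower bound on $\RR(X_i\mid Y\cap Z)$.
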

Note that when $\RR$ is replaced by $\Pr$ and $\bigcap$ by $\bigwedge$ we recover the LLL Thm.~\ref{th:asymmetric_lll}.
Our proof uses properties that hold both for $\Pr$ and $\RR$, in particular we also prove Thm.~\ref{th:asymmetric_lll}.
One can say that we generalize the LLL to any notion of probability for which the properties (i)-(iv) of
Lemma~\ref{lem:propR} hold (these are the only properties of $\RR$ we need in the proof).

\begin{proof}[Proof of Theorem \ref{th:asymmetric_qlll}:]
We modify the proof in \cite{alon2004probabilistic} in order to avoid using the property $\Pr(A|B)+\Pr(A^c|B)=1$ which
does not hold for $\RR$. To show Thm.~\ref{th:asymmetric_qlll}, it is sufficient to prove the following Lemma.

\begin{lemma}
 For any $S \subset [n]$, and every $i \in [n]$, $\RR(X_i|\bigcap_{j \in S} X_j) \geq 1 - y_i$.
\end{lemma}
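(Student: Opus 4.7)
The plan is to prove this lemma by strong induction on $|S|$. The base case $S = \emptyset$ is immediate from hypothesis~(\ref{eq:given_non_symmetric}): since every factor $1-y_j \in (0,1]$, we get $\RR(X_i) \geq 1 - y_i \prod_{(i,j) \in E}(1-y_j) \geq 1 - y_i$. For the inductive step, I partition $S = S_1 \cup S_2$ with $S_1 := \{j \in S : (i,j) \in E\}$ and $S_2 := S \setminus S_1$, and introduce the shorthand $Y := \bigcap_{j \in S_1} X_j$ and $Z := \bigcap_{j \in S_2} X_j$. By the dependency graph assumption, $X_i$ is mutually R-independent of $\{X_j : j \in S_2\}$.

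The first key step is the chain rule (property~(iii) of Lemma~\ref{lem:propR}), which gives
\[
\RR(X_i \mid Y \cap Z) \;=\; \frac{\RR(X_i \cap Y \mid Z)}{\RR(Y \mid Z)}.
\]
To lower bound the numerator, property~(vi) yields $\RR(X_i \cap Y \mid Z) \geq \RR(X_i \mid Z) + \RR(Y \mid Z) - 1$. Mutual R-independence then gives $\RR(X_i \mid Z) = \RR(X_i)$, so by the hypothesis $\RR(X_i \mid Z) \geq 1 - y_i P_E$ with $P_E := \prod_{(i,j) \in E}(1-y_j)$. For the denominator, expanding $\RR(Y \mid Z)$ by the chain rule along any ordering of $S_1$ produces a product of conditional relative dimensions on sets of size strictly less than $|S|$, so the inductive hypothesis applies factorwise and yields $\RR(Y \mid Z) \geq P_{S_1} := \prod_{j \in S_1}(1-y_j)$. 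Writing $a := \RR(X_i)$ and $B := \RR(Y \mid Z)$, the closing algebraic rearrangement is
\[
\RR(X_i \mid Y \cap Z) \;\geq\; \frac{a + B - 1}{B} \;=\; 1 - \frac{1-a}{B} \;\geq\; 1 - \frac{y_i P_E}{P_{S_1}} \;\geq\; 1 - y_i,
\]
where the last inequality uses $S_1 \subseteq N(i)$ to conclude that $P_E/P_{S_1}$ is a product of factors in $(0,1]$.

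The main obstacle, and the reason the proof must diverge from the classical one, is the failure of the complement identity $\RR(X^\perp \mid Z) = 1 - \RR(X \mid Z)$. Classically one trivially upper bounds the numerator by $\Pr(A_i \mid \bigwedge_{S_2} A_j^c) = \Pr(A_i)$ and divides; we cannot access such a direct multiplicative upper bound in the R-setting, and must instead lower bound the numerator through property~(vi), which introduces only the additive slack $a + B - 1$. The delicate point is then to present the ratio as $1 - (1-a)/B$ so that the lower bound $P_{S_1}$ on $B$ cancels against the factor $P_E$ coming from the hypothesis on $a$, leaving only the harmless residue $P_E/P_{S_1} \leq 1$.
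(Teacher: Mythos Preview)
Your proof is correct and follows essentially the same route as the paper's: the same partition of $S$ into dependent and independent parts, the same use of the chain rule to write $\RR(X_i\mid Y\cap Z)=\RR(X_i\cap Y\mid Z)/\RR(Y\mid Z)$, the same application of property~(vi) in place of the unavailable complement identity, and the same inductive chain-rule lower bound on the denominator. The only difference is cosmetic: the paper phrases the key step as upper bounding $\RR(Y\mid Z)-\RR(X_i\cap Y\mid Z)$ by $1-\RR(X_i\mid Z)$, whereas you phrase it as lower bounding $\RR(X_i\cap Y\mid Z)$ by $\RR(X_i\mid Z)+\RR(Y\mid Z)-1$; these are the same inequality rearranged.
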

Thm.~\ref{th:asymmetric_qlll} now follows from the chain rule (Lemma~\ref{lem:propR}.iii):
\[ \RR(\bigcap_{i=1}^n X_i) = \RR(X_1) \RR(X_2|X_1) \RR(X_3|X_1\cap X_2)
\ldots \RR(X_n|\bigcap_{j=1}^{n-1}X_j) \geq \prod_{i=1}^n \left(1-y_i\right). \]

We prove the lemma by complete induction on the size of the set $S$. For the base case, if $S$ is empty, we have
\[\RR(X_i)\geq 1-\left[ y_i \prod_{(i,j)\in E} (1-y_j) \right] \geq 1-y_i. \]
Inductive step: To prove the statement for $S$ we assume it is true for all sets of size $< |S|$. Fix $i$ and define $D
= S\cap \{j|(i,j) \in E\}$ and $I=S \backslash D$ (I and D are the independent and dependent part of S with respect to
the i'th element). Let ${\cal X}_I=\bigcap_{j \in I} X_j$ and  ${\cal X}_D=\bigcap_{j \in D} X_j$. Then
 \begin{equation}\label{eq:equivalent_form}
1-\RR(X_i|\bigcap_{j \in S} X_j)=1-\RR(X_i|{\cal X}_I \cap {\cal X}_D)= 1- \frac{\RR(X_i \cap {\cal X}_D | {\cal
X}_I)}{ \RR({\cal X}_D| {\cal X}_I)}=
 \frac{\RR({\cal X}_D | {\cal X}_I)-\RR(X_i \cap {\cal X}_D | {\cal X}_I)}{ \RR({\cal X}_D| {\cal X}_I)}.
 \end{equation}
To show the lemma we need to upper bound this expression by $y_i$. We first upper bound the numerator:
\begin{eqnarray*}
& &  \RR({\cal X}_D | {\cal X}_I)-\RR(X_i \cap {\cal X}_D | {\cal X}_I) \leq 1- \RR(X_i | {\cal X}_I) = 1-\RR(X_i) \leq
y_i \prod_{(i,j) \in E} (1- y_j),
\end{eqnarray*}
where for the first inequality we use Lemma~\ref{lem:propR}.vi, then the fact that $X_i$ and ${\cal X}_I$ are
R-independent, and the assumption on $\RR(X_i)$, Eq.~\eqref{eq:given_non_symmetric} in Thm.~\ref{th:asymmetric_qlll}.

Now, we lower bound the denominator of Eq.~\eqref{eq:equivalent_form}. Suppose $D=\{j_1,\ldots,j_{|D|} \}$, then
\begin{eqnarray*}
& & \RR\left(\bigcap_{j \in D} X_j | {\cal X}_I\right) =
 \RR\left(X_{j_1}|{\cal X}_I\right) \cdot \ldots \cdot \RR\left(X_{j_{|D|}}|X_{j_1} \cap  \ldots \cap X_{j_{|D|-1}} \cap {\cal
 X}_I\right)
\geq \prod_{j \in D}1-y_j \geq \prod_{j:(i,j) \in E} 1-y_j.
\end{eqnarray*}
The equality follows from the chain rule (Lemma~\ref{lem:propR}.iii), the first inequality follows from the inductive
assumption, and the second inequality follows from the fact that $D=\{j|(i,j) \in E\}\cap S \subseteq \{j|(i,j) \in
E\}$, and that $y_j < 1$.
\end{proof}

For many applications we only need a simpler version of the quantum LLL, often called the symmetric version, which we
have already stated in Thm.~\ref{thm:sym-qlll}.

 \begin{proof}[Proof of Theorem~\ref{thm:sym-qlll}:]
Thm.~\ref{thm:sym-qlll} follows from Thm.~\ref{th:asymmetric_qlll} in the same way the symmetric LLL of
Thm.~\ref{th:symmetric_lll} follows from the more general LLL of Thm.~\ref{th:asymmetric_lll}
\cite{alon2004probabilistic}; we include it here for completeness: If $d=0$ then $\RR(\bigcap_{i=1}^n X_i)=\Pi_{i=1}^n
\RR(X_i)>0$ by the chain rule (Lemma~\ref{lem:propR}.iii) and mutual R-independence of all subspaces. For $d \geq 1$,
by the assumption there is a dependency graph $G = ([n],E)$ for the subspaces $X_1,\ldots,X_n$ in which for each $i$;
$|\{j|(i,j) \in E\}| \leq d$. Taking $y_i=1/(d+1)$ ($<1$) and using that for $d \geq 1$, $(1-\frac{1}{d+1})^d
>\frac{1}{e}$ we get
 $$\RR(X_i) \geq 1-p \geq 1-\frac{1}{e(d+1)}\geq 1-\frac{1}{d+1}(1-\frac{1}{d+1})^d\geq 1-y_i(1-y_i)^{|\{j|(i,j) \in E\}|}, $$
which is the necessary condition Eq.~\eqref{eq:given_non_symmetric} in Thm.~\ref{th:asymmetric_qlll}. Hence
\begin{equation}\label{eq:d-bound}
\RR(\bigcap_{i=1}^n X_i) \geq (1-\frac{1}{d+1})^n>0.
\end{equation}
\end{proof}
Note that Eq.~\eqref{eq:d-bound} also allows us to give a lower bound on the dimension of the intersecting subspace,
which might be useful for some applications.

We can now move to the implications of the QLLL for ``sparse" instances of $\qsat$ and prove
Cor.~\ref{cor:qsat_satisfiable}. It is a special case of this slightly more general Corollary.

\begin{corollary}\label{cor:gen-sat-qlll}
Let $\{\Pi_1,\ldots,\Pi_m\}$ be a $\kqsat$ instance where all projectors have rank at most $r$. If every qubit appears
in at most $D=2^k/(e \cdot r \cdot k)$ projectors, then the instance is satisfiable.
\end{corollary}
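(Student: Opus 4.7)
The plan is to reduce the statement to a direct application of the symmetric Quantum Lov\'{a}sz Local Lemma (Theorem~\ref{thm:sym-qlll}), with the subspaces $X_i := \Pi_i^\perp$ playing the role of the ``good events'' (satisfying states of $\Pi_i$). Since $\bigcap_{i=1}^m \Pi_i^\perp \neq \{0\}$ is exactly the condition for the $\kqsat$ instance to be satisfiable, showing $\RR(\bigcap_i X_i) > 0$ is what we need.

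First, I would estimate $\RR(X_i)$. Since $\Pi_i = \pi_i \otimes I_{n-k}$ with $\pi_i$ a projector of rank at most $r$ on $k$ qubits, the full projector $\Pi_i$ has rank at most $r\cdot 2^{n-k}$, and so $\dim(X_i) = \dim(\ker \Pi_i) \geq 2^n - r\cdot 2^{n-k}$. Dividing by $2^n$ yields $\RR(X_i) \geq 1 - r/2^k$, so we may take $p = r/2^k$ in Theorem~\ref{thm:sym-qlll}.

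Next I would set up the dependency graph. By Lemma~\ref{lem:tensor-indep}, if $\Pi_i$ shares no qubit with a family of projectors $\{\Pi_j\}_{j \in T}$, then $X_i = \Pi_i^\perp$ is mutually R-independent of $\{X_j\}_{j\in T}$. Therefore it suffices to connect $i$ to $j$ in the dependency graph only when $\Pi_i$ and $\Pi_j$ share a qubit. Now $\Pi_i$ acts on exactly $k$ qubits, and by hypothesis each such qubit is contained in at most $D = 2^k/(e\cdot r\cdot k)$ projectors. Therefore the number of indices $j \neq i$ with $\Pi_j$ sharing a qubit with $\Pi_i$ is at most $k(D-1)$, giving maximum degree $d \leq k(D-1)$, hence $d+1 \leq kD$.

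Finally I would verify the hypothesis of Theorem~\ref{thm:sym-qlll}:
\[
p \cdot e \cdot (d+1) \;\leq\; \frac{r}{2^k} \cdot e \cdot kD \;=\; \frac{r}{2^k}\cdot e\cdot k\cdot \frac{2^k}{e\cdot r\cdot k} \;=\; 1.
\]
Theorem~\ref{thm:sym-qlll} then yields $\RR(\bigcap_{i=1}^m X_i) > 0$, so the intersection is a nonzero subspace and any nonzero vector in it is a simultaneous satisfying state for all $\Pi_i$. There is no serious obstacle here; the only subtlety is being careful that mutual R-independence (as required by the QLLL, not merely pairwise R-independence) is what Lemma~\ref{lem:tensor-indep} actually provides when two projectors share no qubits, and that the ``$+1$'' in $(d+1)$ is absorbed by the gap between $k(D-1)$ and $kD$.
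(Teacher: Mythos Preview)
Your proposal is correct and follows essentially the same argument as the paper: set $X_i=\Pi_i^\perp$, take $p=r\cdot 2^{-k}$, invoke Lemma~\ref{lem:tensor-indep} for mutual R-independence, bound the dependency degree by $d=k(D-1)$, and check $p\cdot e\cdot(d+1)\leq 1$ to apply Theorem~\ref{thm:sym-qlll}. The only cosmetic difference is that you bound $d+1\leq kD$ whereas the paper writes out $d+1=k(D-1)+1$; both give the same conclusion.
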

\begin{proof}
By assumption, each projector shares qubits with at most $k(D-1)$ other projectors. As we have already shown in
Lemma~\ref{lem:tensor-indep}, each $\Pi_i^\perp$ is mutually R-independent from all but $d=k(D-1)$ of the other
$\Pi_j^\perp$. With $p=r\cdot 2^{-k}$ we have $\RR(\Pi_i^\perp) \geq 1-p$. The corollary follows from
Thm.~\ref{thm:sym-qlll} because $p\cdot e \cdot (d+1) \leq r \cdot 2^{-k}\cdot e(k(2^k/(e \cdot r \cdot k)-1)+1) \leq
1$.
\end{proof}

\section{An improved lower bound for random $\qsat$}\label{sec:random}

This section is devoted to the proof of Thm.~\ref{thm:threshold}. As mentioned in the introduction, in {\em random
$\kqsat$} we study a distribution over instances of $\kqsat$ with fixed density, defined as follows.

\begin{definition}[Random $\kqsat$]
Random $\kqsat$ of density $\alpha$ is a distribution over instances $\{\Pi_1,\ldots ,\Pi_m\}$ on $n$ qubits, where
$m=\alpha n$, obtained as follows:
\begin{enumerate}
\item Construct a $k$-uniform hypergraph $G$ with $n$ vertices and $m$ edges (the constraint hypergraph) by choosing
$m$ times, uniformly and with replacement, from the ${n \choose k}$ possible $k$-tuples of vertices.
 \item For each edge $i$ ($1 \leq i \leq m$) pick a $k$-qubit state $\ket{v_i}$ acting on the corresponding qubits uniformly from all such states
 (according to the Haar measure) and set $\Pi_i=\ket{v_i}\bra{v_i} \otimes I_{n-k}$.
\end{enumerate}
\end{definition}

\begin{remark}{\bf ($G_k(n,m)$ vs. $G_k(n,p)$)} The distribution on hypergraphs obtained in the first step is denoted by $G_k(n,m)$ and has been studied
extensively (see, e.g., \cite{bollobas_random_2001,alon2004probabilistic}). A closely related model is the so called
Erd\"os-Renyi $G_k(n,p)$ model, where each of the ${n \choose k}$ $k$-tuples is independently chosen to be an edge with
probability $p$. For $p=m/{n \choose k}$ the expected number of edges in $G_k(n,p)$ is $m$ and these two distributions
are very close to each other. In most cases proving that a certain property holds in one implies that it holds in the
other (see \cite{bollobas_random_2001}). There seems to be no consensus whether to define the random $\ksat$ and
$\kqsat$ models with respect to the distribution $G_k(n,m)$ or $G_k(n,p)$; for instance the upper bounds on the random
$\kqsat$ threshold of \cite{bravyi2009bounds} are shown in the $G_k(n,m)$ model, whereas the lower bounds
\cite{laumann2009phase,laumann09dimer} are given in the $G_k(n,p)$ model. This, however, does not matter, as properties
such as being satisfiable with high probability will always hold for both models.
\end{remark}

As mentioned, for $\alpha=c \cdot 2^k/k^2$, even though a graph from $G_k(n,m)$ has average degree $D_{avg}=k\alpha=c
\cdot 2^k/k$, and hence {\em on average} each qubit appears in $c \cdot 2^k/k$ projectors, we cannot apply the QLLL and
its Cor.~\ref{cor:qsat_satisfiable} directly: The degrees in $G_k(n,m)$ are distributed according to a Poisson
distribution with mean $D_{avg}$ and hence we expect to see some high degree vertices (in fact the expected maximum
degree at constant density is expected to be roughly logarithmic in $n$ \cite{bollobas_random_2001}). The idea behind
the proof of Thm.~\ref{thm:threshold} is to single out the ``high-degree" part $V_\H$ of the graph and to treat it
separately. The key is to show (i) that the matching conditions of Laumann et al.'s Thm.~\ref{thm:dimer} is fulfilled
by $V_\H$ on one hand and  (ii) to demonstrate how to ``glue" the solution on $V_\H$ with the one provided by QLLL on
the remaining graph.

We first show how to glue two solutions, which also clarifies the requirements for $\H$.

\begin{lemma}[Gluing Lemma]\label{lem:glue}
Let ${\cal P}=\{\Pi_1,\ldots ,\Pi_m\}$ be an instance of $\kqsat$ with rank-$1$ projectors. Assume that there is a
subset of the qubits $V_\H$ and a partition of the projectors into two sets $\H$ and $\LL$, where $\H$ (possibly empty)
consists of all projectors that act only on qubits in $V_\H$, such that
 \begin{enumerate} \item The reduced instance given by $\H$ (restricted to qubits in $V_\H$) is satisfiable.
 \item Each qubit $\notin V_\H$ appears in at most $2^k/(4\cdot e \cdot k)$ projectors from $\LL$.
 \item Each projector in $\LL$ has at most one qubit  in $V_\H$.
\end{enumerate}
Then ${\cal P}$ is satisfiable.
\end{lemma}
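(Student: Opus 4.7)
The plan is to reduce the full instance to a pure constraint satisfaction problem on $V \setminus V_\H$ by ``absorbing'' the contribution of a satisfying state on $V_\H$ into new effective projectors, and then to invoke the QLLL. Concretely, by the first hypothesis, I would fix any state $\ket{\phi_\H}$ on $V_\H$ with $\Pi \ket{\phi_\H} = 0$ for every $\Pi \in \H$; the goal is then to exhibit $\ket{\psi}$ on $V \setminus V_\H$ such that $\ket{\phi_\H} \otimes \ket{\psi}$ satisfies every $\Pi_i \in \LL$ as well.

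For each $\Pi_i = \ket{v}\bra{v} \in \LL$ that, by the third hypothesis, touches a single qubit $q_0 \in V_\H$, let $q_1,\ldots,q_{k-1}$ denote the remaining qubits and expand $\ket{v} = \sum_{a\in\{0,1\}} \ket{a}_{q_0} \ket{v_a}_{q_1\cdots q_{k-1}}$ and $\ket{\phi_\H} = \sum_a \ket{a}_{q_0}\ket{\phi_a}$ with $\ket{\phi_a}$ living on $V_\H \setminus \{q_0\}$. A direct partial-inner-product computation shows that $\Pi_i (\ket{\phi_\H} \otimes \ket{\psi}) = 0$ iff $\sum_a \ket{\phi_a} \otimes \braket{v_a}{\psi} = 0$, which is a linear condition on $\ket{\psi}$. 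I would then split into cases: if $\ket{\phi_0}$ and $\ket{\phi_1}$ are linearly independent, the condition is equivalent to $\braket{v_0}{\psi} = \braket{v_1}{\psi} = 0$, i.e. $\ket{\psi}$ is annihilated by the projector $\tilde\Pi_i$ onto $\mathrm{span}(\ket{v_0},\ket{v_1})$; if they are dependent, it reduces to a single orthogonality condition and $\tilde\Pi_i$ has rank at most $1$. Either way, $\tilde\Pi_i$ is a projector of rank at most $2$ acting on the $k-1$ qubits $q_1,\ldots,q_{k-1}$, and by construction $\ket{\psi}$ satisfies $\tilde\Pi_i$ iff $\ket{\phi_\H} \otimes \ket{\psi}$ satisfies $\Pi_i$. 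For the remaining $\Pi_i \in \LL$ that touch no qubit in $V_\H$ I would simply set $\tilde\Pi_i := \Pi_i$, a rank-$1$ projector on $k$ qubits of $V \setminus V_\H$.

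It remains to apply the QLLL (Corollary~\ref{cor:gen-sat-qlll} with $r=2$, or Theorem~\ref{thm:sym-qlll} directly) to $\{\tilde\Pi_i\}_{i \in \LL}$, viewed as constraints on $V \setminus V_\H$. Each $\tilde\Pi_i$ has rank at most $2$ on at most $k$ qubits, so $\RR(\tilde\Pi_i^\perp) \geq 1 - 4 \cdot 2^{-k} =: 1-p$; and by the tensor structure, Lemma~\ref{lem:tensor-indep} gives R-independence between any two $\tilde\Pi_i^\perp, \tilde\Pi_j^\perp$ that act on disjoint qubit sets. By the second hypothesis, each qubit of $V \setminus V_\H$ appears in at most $D := 2^k/(4 \cdot e \cdot k)$ projectors of $\LL$, so the dependency degree is bounded by $d \leq k(D-1)$, and a direct computation gives
\[
p \cdot e \cdot (d+1) \;\leq\; 4 e \cdot 2^{-k}\bigl(kD - k + 1\bigr) \;=\; 1 - 4e(k-1)\cdot 2^{-k} \;\leq\; 1.
\]
Hence Theorem~\ref{thm:sym-qlll} delivers a nonzero $\ket{\psi} \in \bigcap_{i \in \LL} \tilde\Pi_i^\perp$, and $\ket{\phi_\H} \otimes \ket{\psi}$ then satisfies every projector in $\H \cup \LL = \mathcal{P}$.

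The main obstacle I anticipate is the linear-algebraic case analysis producing $\tilde\Pi_i$: one must verify that the ``effective'' constraint is a genuine projector of rank at most $2$ uniformly over all $\ket{\phi_\H}$, including degenerate configurations (e.g.\ $\ket{\phi_0}$ or $\ket{\phi_1}$ vanishing, or $\ket{v_0},\ket{v_1}$ being linearly dependent). Once that construction is in place, the QLLL bookkeeping is routine, and the factor $4$ in the hypothesis is precisely what compensates for the potential doubling of the rank relative to Corollary~\ref{cor:qsat_satisfiable}.
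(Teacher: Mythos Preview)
Your proposal is correct and follows the same strategy as the paper: fix a satisfying state $\ket{\phi_\H}$ on $V_\H$, replace each boundary projector in $\LL$ by an effective rank-$\le 2$ constraint on the remaining $k-1$ qubits, and then apply the symmetric QLLL (equivalently Corollary~\ref{cor:gen-sat-qlll}) on $V\setminus V_\H$ with $p=4\cdot 2^{-k}$ and $d\le k(D-1)$.

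The one place where the paper is slightly slicker than your plan is precisely the step you flag as the main obstacle. Instead of computing the \emph{exact} effective constraint on $\ket{\psi}$ (which, as you note, depends on $\ket{\phi_\H}$ and forces a case split on whether $\ket{\phi_0},\ket{\phi_1}$ are linearly independent), the paper simply takes $\tilde\Pi_i$ to be the projector onto $\mathrm{span}(\ket{v_0},\ket{v_1})$ unconditionally. This is a \emph{sufficient} condition for $\ket{\phi_\H}\otimes\ket{\psi}$ to kill $\Pi_i$ --- if $\ket{\psi}$ is orthogonal to both $\ket{v_0}$ and $\ket{v_1}$ then any extension to $q_0$ is orthogonal to $\ket{v}$ --- and it is already a rank-$\le 2$ projector on $k-1$ qubits, independent of $\ket{\phi_\H}$ and of any degeneracies. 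So your case analysis is correct but unnecessary: taking the coarser $\tilde\Pi_i$ costs nothing in the QLLL bookkeeping (the rank bound is still $2$) and removes the obstacle entirely.
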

\begin{proof}
Let $\ket{\Phi_\H}$ be a satisfying state for $\H$ on the qubits $V_\H$ (if ${\H}=\emptyset$ this can be any state). To
extend it to the whole instance, we need to deal with the projectors in $\LL$ acting on a qubit from $V_\H$. Let
${\LL}=\{\Pi_1, \ldots , \Pi_\ell\}$. From $\LL$ we construct a new ``decoupled" instance ${\LL}'=\{Q_1, \ldots ,
Q_\ell\}$ of $\kqsat$ with projectors of rank at most $2$ that have no qubits in $V_\H$. If $\Pi_i \in \LL$ does not
act on any qubit in $V_\H$, we set $Q_i:=\Pi_i$. Otherwise, order the $k$ qubits on which $\Pi_i$ acts such that the
first one is in $V_\H$. $\Pi_i$ can be written as $\Pi_i=\ket{v_i}\bra{v_i} \otimes I_{n-k}$, where $\ket{v_i}$ is a
$k$-qubit state. We can decompose $\ket{v_i}=a_0 \ket{0} \otimes \ket{v^0_i}+a_1 \ket{1} \otimes \ket{v^1_i}$, where
the first part of the tensor product is the qubit in $V_\H$ and $\ket{v^1_i}$ and $\ket{v^2_i}$ are $(k-1)$-qubit
states on the remaining qubits. Define $Q_i=\ket{v^1_i}\bra{v^1_i}+\ket{v^2_i}\bra{v^2_i} \otimes I_{n-k+1}$. Call
$V_{{\LL}'}$ the set of qubits on which the projectors in ${\LL}'$ act on. Note that by construction $V_{{\LL}'}$ is
disjoint from $V_\H$, and that $V_{\H} \cup V_{{\LL}'}$ is the set of all qubits in ${\cal P}$; hence $\H$ and ${\LL}'$
are ``decoupled".

\begin{claim}
Assume there is a satisfying state $\ket{\Phi_{{\LL}'}}$ for ${\LL}'$ on $V_{{\LL}'}$. Then
$\ket{\Phi}=\ket{\Phi_{\H}}\otimes \ket{\Phi_{{\LL}'}}$ is a satisfying state for ${\cal P}$.
\end{claim}
\begin{proof}
By construction, $\ket{\Phi}$ satisfies all the projectors from $\H$ and all projectors in $\LL$ that do not have
qubits in $V_\H$. To see that it also satisfies any projector $\Pi_i$ from $\LL$ with a qubit in $V_\H$, observe that
$\ket{\Phi_{{\LL}'}}$ is orthogonal to both $\ket{v^1_i}$ and $\ket{v^2_i}$. Hence no matter how $\ket{\Phi_{{\LL}'}}$
is extended on the qubit of $V_\H$ in $\Pi_i$, the resulting state is orthogonal to $\ket{v_i}$.
\end{proof}
It remains to show that ${\cal L}'$ is satisfiable. This follows immediately from Cor.~\ref{cor:gen-sat-qlll}: we
observe that each projector in ${\LL}'$ can be viewed as a $k$-local projector of rank at most $4$; and by the
assumption each qubit in $V_{{\LL}'}$ appears in at most $2^{k}/(4 \cdot e \cdot k)$ projectors of ${\LL}'$.
\end{proof}

The Gluing Lemma~\ref{lem:glue} only depends on the underlying constraint hypergraph. We can hence give the
construction of the ``high degree" part of the instance purely in terms of hypergraphs, and will from now on associate
subsets of edges with the corresponding subsets of projectors. Motivated by the Gluing Lemma, our goal is to separate a
set of ``high degree" vertices $V_\H$ (above a certain cut-off degree $D$) with induced edges $\H$ such that each edge
outside $\H$ has at most one vertex in $V_\H$. We achieve this by starting with the high degree vertices and
iteratively adding all those edges that intersect in more than one vertex.

\begin{definition}[Construction of $V_{\H}$]\label{def:construct}
 Let $G=G([n],E)$ be a $k$-uniform hypergraph and $D>0$. Construct sets of vertices
$V_0,V_1,\ldots \subseteq [n]$ and edges $E_1, E_2\ldots \subseteq E$ iteratively in the following steps, starting with
all sets empty:
\begin{itemize}
\itemsep=-1pt \item[0)] Let $V_{0}=\{v \in V| \deg(v) > D\}$.
 \item[1)] For all $e \in E \setminus E_{0}$, if $e$ has $2$ or
more vertices in $V_{0}$, then add $e$ to $E_{1}$, and add to $V_1$ all vertices in $e$ not already in $V_{0}$.\\
$\vdots$

\item[i)] For all $e \in E \setminus (E_{0} \cup \ldots \cup E_{i-1})$, if $e$ has $2$ or more vertices in
$\bigcup_{j=0}^{i-1}V_{j}$, add $e$ to $E_{i}$, and add to $V_{i}$ all the vertices in $E_{i}$ which are not already in
$\bigcup_{j=0}^{i-1}V_{j}$. \item [] Stop at the first step $s$ such that $E_s=\emptyset$.
 \end{itemize}
 Let $V_{\H}:=\bigcup_{i=0}^{s} V_{i}$, ${\H}:=\bigcup_{i=1}^{s} E_{i}$ and $\LL:=E \setminus \H$.
\end{definition}
By construction all the $V_i$ are disjoint and similarly for the $E_i$. The process of adding edges stops at some step
$s$ ($E_s= \emptyset $), because $E \setminus (E_{0} \cup \ldots \cup E_{s-1})$ keeps shrinking until this happens.
Note that ${\H}$ consists precisely of all those edges in $E$ that have only vertices in $V_\H$ (i.e. $G(V_{\H},\H)$ is
the hypergraph induced by $G$ on $V_\H$).

To show that a random $\kqsat$ instance of density $\alpha$ is satisfiable with high probability, we only need to show
that the construction of $V_\H$, $\H$ and $\LL$ of Def.~\ref{def:construct} fulfills the conditions of the Gluing
Lemma~\ref{lem:glue} with high probability. We set $D=2^k/(4 \cdot e \cdot k)$ in Def.~\ref{def:construct}, so that
conditions 2. and 3. are fulfilled by construction. To finish the proof of Thm.~\ref{thm:threshold} it thus suffices to
show that the instance given by $\H$ on qubits in $V_\H$ is satisfiable. To show this we build on Laumann et al.'s
Thm.~\ref{thm:dimer}.

\begin{lemma}\label{lem:H_matches}
For a random $\kqsat$ instance with density $\alpha \leq  2^k/(12 \cdot e \cdot k^2)$, the reduced instance $\H$
obtained in the construction of Def.~\ref{def:construct} with $D=2^k/(4 \cdot e \cdot k)$ fulfills the matching
conditions of Thm.~\ref{thm:dimer} with high probability.
\end{lemma}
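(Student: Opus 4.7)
The plan is to apply Hall's theorem to the bipartite graph between the projectors in $\H$ and the qubits in $V_\H$: the matching condition of Thm.~\ref{thm:dimer} is equivalent to the requirement that every subset $T \subseteq V_\H$ contains at most $|T|$ edges of $\H$. Since $\H$ consists exactly of those edges of $G$ lying entirely inside $V_\H$, this is in turn implied by the stronger sparse-subsets property: every $T \subseteq V$ with $|T| \leq |V_\H|$ contains at most $|T|$ edges of $G$. I will establish this w.h.p.\ via two probabilistic ingredients.

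For the first ingredient, a direct Chernoff/union-bound argument works. The number of edges of $G$ inside a fixed $T$ of size $t$ is binomial with mean $\mu_t \leq \alpha t^k/n^{k-1}$, so $\Pr[\#\{e \subseteq T\} \geq t] \leq (e\alpha (t/n)^{k-1})^t$, and a union bound over the $\binom{n}{t} \leq (en/t)^t$ choices of $T$ of size $t$ yields the aggregate $(e^2 \alpha (t/n)^{k-2})^t$. For $\alpha \leq 2^k/(12 e k^2)$ and $t \leq \beta n$ with $\beta$ chosen so that this base is strictly below $1$, summing over $t$ gives $o(1)$, so w.h.p.\ every $T \subseteq V$ of size $\leq \beta n$ contains at most $|T|$ edges of $G$.

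For the second ingredient, I need $|V_\H| \leq \beta n$ on the same (high-probability) event. The degree of any vertex in $G_k(n,m)$ is $\mathrm{Bin}(m, k/n)$ with mean $k\alpha = D/3$; Chernoff gives $\Pr[\deg(v) \geq D] \leq (e/3)^D$, and Markov's inequality then yields $|V_0| \leq \beta_0 n$ w.h.p.\ for $\beta_0$ small. By Def.~\ref{def:construct}, each vertex in $V_\H \setminus V_0$ is added via an edge of $\H$ contributing at most $k-2$ new vertices, so $|V_\H| \leq |V_0| + (k-2)|\H|$. A bootstrapping argument conditional on the sparse-subsets event closes the loop: if the construction first violated $|V_\H^{(i)}| \leq \beta n$ at some step $i$, then $V_\H^{(i-1)}$ itself would be a subset of $V$ of size $\leq \beta n$, and the edges added through step $i$ together with the growth bound on $|V_i|$ would force a subset of $V$ of size $\leq \beta n$ to contain more edges than vertices, contradicting the first ingredient.

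The main obstacle is this second ingredient: $V_\H$ and $\H$ are mutually dependent random subsets of $V$ and $E$, and the sparse-subsets event controls only subsets of size $\leq \beta n$ while a blowup would push $V_\H$ just past that threshold. Closing the loop cleanly requires combining the $|V_0|$ bound with a separate Chernoff estimate on the number of edges of $G$ with $\geq 2$ vertices in any given set $S$ (needed to control the per-step growth $|V_i| \leq (k-2)|E_i|$), and verifying arithmetically that for $\alpha \leq 2^k/(12 e k^2)$ the resulting fixed-point inequality $|V_\H| \leq |V_0| + O(k^3 \alpha |V_\H|^2/n)$ admits only a solution well below $\beta n$. Once $|V_\H| \leq \beta n$ is secured, Hall's condition on $\H$ follows from the first ingredient applied to $T = V_\H$ and all its subsets.
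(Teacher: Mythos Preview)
Your first ingredient (Hall's theorem plus a union bound showing that every $T\subseteq V$ of size $\le \gamma n$ contains at most $|T|$ edges) is correct and is exactly the paper's Lemma~15. The gap is in the second ingredient, bounding $|V_\H|$. The bootstrap built only on the sparse-subsets event cannot close: writing $W_i=V_0\cup\dots\cup V_i$, every edge in $E_1\cup\dots\cup E_i$ lies entirely in $W_i$ and contributes at most $k-2$ new vertices, so $|W_i|\le |V_0|+(k-2)\,|E_1\cup\dots\cup E_i|$; but the sparse-subsets bound gives only $|E_1\cup\dots\cup E_i|\le |W_i|$, which yields the vacuous $|W_i|\le |V_0|+(k-2)|W_i|$. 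You acknowledge this and propose as a fix a uniform-in-$S$ Chernoff estimate on the number of edges with at least two vertices in $S$, tight enough to recover the quadratic $O(\alpha k^2|S|^2/n)$ and hence your fixed-point inequality. That uniform estimate does not survive the union bound at the relevant scale $|S|\approx |V_0|\approx \beta_0 n$ with $\beta_0=(e^2/27)^{D/3}$: the number of candidate sets contributes entropy $\approx t\log(e/\beta_0)=\Theta(Dt)$, while the Chernoff tail is only $\exp(-c\,\alpha k^2 \beta_0 t)$, and since $\alpha k^2=\Theta(Dk)$ while $\beta_0$ is exponentially small in $D$, the entropy term dominates. In fact the strongest \emph{linear} bound one can push through a union bound here is of the form $c|S|$ with $c>1$, which still leaves the recurrence $|W_i|\le |V_0|+(k-2)c\,|W_{i-1}|$ exploding for $k\ge 3$.

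The paper avoids this obstacle by never seeking a bound uniform over all small $S$. It instead runs an exposure-martingale (Azuma) argument layer by layer: at stage $i$ it \emph{conditions} on the realized sets $V_0,\dots,V_{i-1}$, on the degrees of their vertices, and on $E_1,\dots,E_{i-1}$; under this conditioning the remaining edge-endpoints are essentially uniform, which gives $E[|V_i|]\le |V_{i-1}|/3$ and, via Azuma, $|V_i|\le 2^{-i}\epsilon_0 n$ with high probability, hence $|V_\H|\le 2\epsilon_0 n<\gamma n$. That conditioning step---replacing a union bound over all sets by concentration around the conditional mean on the actual $W_{i-1}$---is precisely the idea your plan is missing.
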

\begin{proof}
The proof of this key lemma proceeds in two parts. The first one (Lemma \ref{le:small_has_matching}) shows that {\em
any} hypergraph induced by a {\em small enough} subset of vertices in a hypergraph from $G_k(n,\alpha n)$ fulfills the
matching conditions. The second part (Lemma \ref{lem:H_is_small}) then shows that $V_\H$ is indeed small enough with
high probability.

\begin{lemma}[Small subgraphs have a matching]
\label{le:small_has_matching} Let $G$ be a random hypergraph distributed according to $G_{k}(n,\alpha n)$ and let
$\gamma=(e (e^2 \cdot \alpha)^{1/(k-2)})^{-1}$.  With high probability, for all $W\subset V$ with $|W| < \gamma n $,
the induced hypergraph on $W$ obeys the matching conditions of Thm.~\ref{thm:dimer}.
\end{lemma}

\begin{proof}
There is simple intuition why small sets obey the matching conditions - the density inside a small induced graph is
much smaller than the density of $G$: For simplicity set $\alpha=2^{k-1}$ and $\gamma=1/(2+2\delta)$ for some $\delta
>0$. Imagine fixing $W \subset V$ of size $\gamma n$ and then picking the graph $G$ according to $G_k(n,p)$ with
$p=\alpha n/ {n \choose k}\approx \frac{\alpha}{n^{k-1}}=(\frac{2}{n})^{k-1}$. The induced graph on $W$ is distributed
according to $G_k(\gamma n,p)$ and hence its density is $\alpha' = p\cdot {\gamma n \choose k}/\gamma n \approx p \cdot
(\gamma n)^{k-1}=(1+\delta)^{-(k-1)} \ll 1$. At such low densities the matching conditions are fulfilled with high
probability (see the remark below Thm.~\ref{thm:dimer}). We proceed to prove the somewhat stronger statement that the
matching conditions hold for {\em all} small subsets.

Let us first examine the matching conditions. We can construct a bipartite graph $B(G)$, where on the left we put the
edges of $G$ and on the right the vertices of $G$. We connect each edge on the left with those vertices on the right
that are contained in that edge. Then the matching conditions of Thm.~\ref{thm:dimer} are equivalent to saying that
there is a matching in $B(G)$ that covers all left vertices.

By Hall's theorem \cite{hall1935representatives,diestel1997graph}, such a matching exists iff for all $t$, every subset
of $t$ edges on the left is connected to at least $t$ vertices on the right. Hence, there is a ``bad" subset $W\subset
V$ with $|W| < \gamma n $ not obeying the matching conditions iff for some $t < \gamma n$ there is a subset of vertices
of size $t-1$ that contains $t$ edges. Let us compute the probability of such a bad event to happen.

First, fix a subset $S \subseteq V$  of size $t-1$ and let us compute the probability that it contains $t$ edges. The
probability that a random edge lands in $S$ is at most $((t-1)/n)^k$. Since in $G_k(n,m)$ all $m$ edges are picked
independently, we get
$$\Pr[S \,\, \textrm{contains}\,\, t \,\, \textrm{edges}] \leq {m \choose t} \left( \frac{t-1}{n} \right)^{kt}.$$
By the union bound over all subsets $S$ of size $t-1$ (there are ${n \choose t-1}$ of them) and all $t$ we get the
following bound
\begin{align*}
\Pr[ \exists \,\textrm{``bad" W}] & \leq \sum_{t=1}^{\gamma n} { n \choose t-1} {m \choose t}
\left(\frac{t-1}{n}\right)^{kt} \leq \sum_{t=1}^{\gamma n} { n \choose t} {\alpha n \choose t}
\left(\frac{t}{n}\right)^{kt}
\leq \left(\frac{ne}{t}\right)^{t}\left(\frac{\alpha n e}{t}\right)^{t} \left(\frac{t}{n}\right)^{kt}\\
&= \sum_{t=1}^{\gamma n} \left(e^2 \alpha \left(\frac{t}{n}\right)^{k-2}\right)^t=:\sum_{t=1}^{\gamma n} a_t.
\end{align*}
Note that the sum is clearly dominated by the first term ($t=1$). More precisely we have
\begin{align*}
\forall 1 \leq t < \gamma n-1 \quad \frac{a_{t+1}}{a_t}=e^2 \alpha
\left(\frac{t+1}{t}\right)^{(k-2)t}\left(\frac{t+1}{n}\right)^{k-2}\leq e^2 \alpha e^{k-2} \gamma^{k-2}=:r < 1,
\end{align*}
where for the last inequality we have used the bound on $\gamma$. Hence $\sum_{t=1}^{\gamma n} a_t \leq
\sum_{t=1}^{\gamma n} a_1 r^{t-1}=\frac{1}{1-r}a_1$, and we get $\Pr[\exists \,\textrm{``bad" W}] \leq \frac{1}{1-r}
\frac{e^2 \alpha}{n^{k-2}} \rightarrow 0$.
\end{proof}

\begin{lemma}[$V_{\H}$ is small]\label{lem:H_is_small}
Let $G$ be a hypergraph picked from $G_k(n,\alpha n)$ and let $V_{\H}$ be the set of vertices generated by the
procedure in Definition \ref{def:construct} with $D=2^k/(4 \cdot e \cdot k)$. Then for $k \geq 12$ and $\alpha k \leq
D/3$, with high probability $|V_{\H}|\leq (\epsilon_0+o(1)) n$ for some $\epsilon_0$ satisfying $\epsilon_0<\gamma$
where $\gamma$ is the constant from Lemma \ref{le:small_has_matching}.
\end{lemma}
\begin{remark}
As is standard in the model of random $\ksat$ and random $\kqsat$, if we look at the large $k$ limit we will always
first take the limit $n \rightarrow \infty$ for fixed $k$ and then $k \rightarrow \infty$. Hence we will always treat
$k$ (and $D$ and $\alpha$) as a constant in $O(\cdot)$ and $o(\cdot)$ terms.
\end{remark}
\begin{proof}
Throughout the proof we will set $\alpha$ to its maximum allowed value of $D/(3k)$. The statement of
Lemma~\ref{lem:H_is_small} for smaller $\alpha$ then follows by monotonicity.

For the proof of this lemma, we first replace $G_k(n, \alpha n)$ by a slightly different model of random hypergraphs
$G_k'(n, \alpha' n)$. In $G_k'(n, \alpha' n)$, we first generate a random sequence of vertices of length $k \alpha' n$
with each vertex picked i.i.d. at random. We then divide the sequence into blocks of length $k$ and, for each block
that contains $k$ different vertices, we create a hyperedge. (For blocks that contain the same vertex twice, we do
nothing.)

The expected number of blocks containing the same vertex twice is $O({k \choose 2} \alpha')=O(1)$. Therefore, we can
choose $\alpha'=\alpha+o(1)$ and, with high probability, we will get at least $\alpha n$ edges (and each of those edges
will be uniformly random). This means that it suffices to prove the lemma for $G_k'(n, \alpha' n)$.

For this model, we will show that $|V_i|$ satisfies the following bounds:

\begin{claim}\label{cl:bounds}
 There is an $\epsilon_0 < \frac{\gamma}{2}$ and $\epsilon_{i}:= 2^{-i}\epsilon_{0}$ such that for all
 $i: 0 \leq i \leq l$ with $l := \lceil \frac{3}{2} \log n \rceil$, with probability at least $1-\frac{2^i}{n^2}$,
\begin{equation}
\label{eq:req2}
  |V_i| \leq \epsilon_i n.
\end{equation}
\end{claim}

This implies that $V_l$ is empty with probability at least $1-O(\frac{1}{\sqrt{n}})$. In this case,
$|V_{\H}|=\sum_{i=0}^{l-1} |V_i|$. With probability at least $1-\frac{2^{l+1}}{n^2}=1-O(\frac{1}{\sqrt{n}})$,
(\ref{eq:req2}) is true for all $i$. Then,
\[ |V_H| = \sum_{i=0}^{l-1} |V_i|
\leq 2 \epsilon_0 n < \gamma n, \] which completes the proof of the lemma.

In what follows we will repeatedly use Azuma's inequality
\cite{azuma1967weighted,hoeffding1963probability,alon2004probabilistic}:

Let $Y_{0},\ldots,Y_{n}$ be a martingale, where $|Y_{i+1}-Y_{i}|\leq 1$ for all $0 \leq i < n$. For any $t > 0$,
\begin{equation}\label{eq:azuma}\pr(|X_{n}-X_{0}| \geq t) \leq \text{exp}(-\frac{t^{2}}{2n}). \end{equation}

We now prove Claim \ref{cl:bounds}, by induction on $i$.  We start with the base case $i=0$. Here, we will also bound
$R_0$, the number of edges incident to $V_0$, and show
\begin{equation}
\label{eq:req4}
 \Pr \left[ R_0 \geq \epsilon_0 D n \right] \leq \frac{1}{2n^2} .
\end{equation}

\paragraph{The $i=0$ case.} Recall that $V_{0}=\{v| \text{deg}(v) \geq D \}$. By linearity of expectation, $E[|V_{0}|]=n
\pr(\deg(v) \geq D )$. The degree of a vertex is a sum of independent 0-1 valued random variables with expectation
slightly less than $\alpha' k$. In the large $n$ limit, this becomes a Poisson distribution with mean $\leq \alpha' k
=D/3+o(1)$. Using the tail bound for Poisson distributions (see, e.g., \cite{alon2004probabilistic} Thm. A.1.15), we
obtain $\pr(\deg(v) \geq D )\leq (e^2/27)^{D/3}$. Note that for $k \geq 12$ we have
\begin{equation*} (\frac{e^2}{27})^{D/3}\leq \frac{5}{8}\epsilon_0, \mbox{~where~we~set~} \epsilon_0=\frac{\alpha'}{12
D^2 k}=\frac{D/(3k) + o(1)}{12 D^2 k} \leq \frac{1}{12 D k^2} < \frac{\gamma}{2}.\end{equation*} Then, $E[|V_0|]\leq
\frac{5}{8}\epsilon_0 n$.

To bound $E[R_0]$, observe that $R_0 \leq \sum_{v \in V_0} \text{deg}(v)$ and hence
\[ E[R_0] \leq n \Pr(\text{deg}(v)\geq D) \cdot E[\text{deg}(v)|\text{deg}(v) \geq D] \leq
\frac{5}{8}\epsilon_0 n \cdot E[\text{deg}(v)|\text{deg}(v) \geq D] \leq \frac{5}{6}\epsilon_0 n D,\]
 where for the last inequality we have used $E[\text{deg}(v)|\text{deg}(v) \geq D] \leq \frac{4}{3}D$, which follows from the following simple fact:
\begin{fact}
Let $X$ be a random variable distributed according to a Poisson distribution with mean $\lambda$. Then for $k>1$,
$E[X|X\geq k\lambda] \leq (k+1)\lambda$.
\end{fact}
\begin{proof}
\begin{align*}E[X| X \geq k\lambda ] &= \frac{\sum_{j=k\lambda}^{\infty} j\cdot \pr(X=j)}{\pr(X \geq k\lambda)} = \frac{1}{\pr(X \geq k\lambda)}
\sum_{j=k\lambda}^{\infty} j e^{-\lambda}\frac{\lambda^j}{j!}
 = \frac{1}{\pr(X \geq k\lambda)} \lambda \sum_{j=k\lambda}^{\infty}  e^{-\lambda}\frac{\lambda^{j-1}}{(j-1)!} \\
 &= \lambda \left(1+\frac{\pr(X=k\lambda-1)}{\pr(X\geq k\lambda)} \right)  \leq  \lambda \left(1+\frac{\pr(X=k\lambda-1)}{\pr(X = k\lambda)}
 \right)=\lambda \left(1+\frac{k\lambda}{\lambda}\right)=(1+k)\lambda.
 \end{align*}
 \end{proof}

To prove (\ref{eq:req2}) and (\ref{eq:req4}), we use Azuma's inequality Eq.~\eqref{eq:azuma}. Let $X_0, X_1, \ldots,
X_{ k\alpha' n}$ be the martingale defined in the following way. We pick the vertices of the sequence defining $G$ at
random one by one and let $X_i$ be the expectation of $|V_0|$ (resp. $R_0$) when the first $i$ vertices of the sequence
are already chosen and the rest is still uniformly random. Picking one vertex in any particular way changes the size of
$|V_0|$ by at most $1$ and of $R_0$ by at most $D$ (when the degree of a vertex crosses the threshold $D$ to be in
$V_0$). Therefore, for $V_0$, $|X_i-X_{i-1}|\leq 1$ ($|X_i-X_{i-1}|\leq D$ for the bound on $R_0$). For $V_0$, by
Azuma's inequality
\begin{equation*}
 \pr[| |V_0| - E[|V_0|] | \geq t] =
\pr[|X_{k \alpha' n}-X_0|\geq t] \leq e^{- \frac{t^2}{2 k \alpha' n}} .
\end{equation*} To make this probability less than $1/n^2$, we
chose $t= 2\sqrt{k \alpha'}  \sqrt{n \ln n}$. Then, with probability at least $1-\frac{1}{n^2}$, $|V_0| \leq
E[|V_0|]+O(\sqrt{n \log n})\leq \frac{5}{8}\epsilon_0 n+O(\sqrt{n \log n})\leq \epsilon_0n$, which gives
bound~\eqref{eq:req2}. Similarly, to show bound~\eqref{eq:req4} for $R_0$, we choose $t= 2D\sqrt{k \alpha'} \sqrt{n
(\ln n+1)}$. Then, we get that with probability at least $1-\frac{1}{2n^2}$, $R_0 \leq E[R_0]+O(\sqrt{n \log n})\leq
\frac{5}{6}\epsilon_0 nD+O(\sqrt{n \log n})\leq \epsilon_0nD$.

\paragraph{The $i>0$ case.} We will first {\em condition} on the event $\cal F$ that
bound~\eqref{eq:req4} holds and bounds~\eqref{eq:req2} hold for all previous $i$.
%By the union bound they do not hold
%with probability at most $O(\frac{i}{n^2}) = O(\frac{\log n}{n^2})$. Since $|V_i|$ cannot be larger than $n$, this case
%contributes at most $O(\frac{\log n}{n})$ to $E[|V_i|]$.
Moreover, we fix the following objects:
\begin{itemize}
\item The sets $V_0, \ldots, V_{i-1}$; \item The edges in $E_1, \ldots, E_{i-1}$; \item The degrees of all vertices
$v\in V_0 \union \ldots \union V_{i-1}$;
\end{itemize}

Conditioning on $V_0, \ldots, V_{i-1}$ and their degrees is equivalent to fixing the number of times that each $v\in
V_0 \union \ldots \union V_{i-1}$ appears in the sequence defining the graph $G$ according to $G_k'(n, \alpha'n)$.
Furthermore, conditioning on $E_1, \ldots, E_{i-1}$ means that we fix some blocks of the sequence to be equal to edges
in $E_1, \ldots, E_{i-1}$. We can then remove those blocks from the sequence and adjust the degrees of the vertices
that belong to those edges. Conditioning on $E_1, \ldots, E_{i-1}$ also means that we condition on the fact that there
is no other block containing two vertices from $V_0 \union \ldots \union V_{i-2}$.

We now consider a random sequence of vertices satisfying those constraints. Let $B$ be the total number of blocks
(after removing $E_0, \ldots, E_{i-1}$) and call $M_j$ the number of blocks that contain one element of $V_j$ for $0
\leq j \leq i-1$. (The $M_j$ are fixed since the $V_j$ are fixed.) The sequence of vertices on the $B$ blocks is
uniformly random among all sequences with a fixed number of occurrences of elements in $V_j$ (a total of $M_j$) and
such that no two of them occur in the same block. Note that an edge from $E_i$ must have at least one of its vertices
in $V_{i-1}$. We have $M_0+\ldots+M_{i-2}$ blocks containing one vertex from $V_0 \union \ldots \union V_{i-2}$ each.
For each of those blocks, the probability that one of the $M_{i-1}$ occurrences of $v\in V_{i-1}$ ends up in it is at
most
\begin{equation}
\label{eq:bound1} (k-1) \frac{M_{i-1}}{kB - M_0 - \ldots - M_{i-2}} .
%\onote{ not\ relevant\ anymore: (k-1) \frac{M_{i-1}}{n k \alpha - k(E_1+\ldots+E_{i-2}) - M_0 - \ldots - M_{i-2}} .}
\end{equation}
For any other block, the probability that two or more occurrences of $v\in V_{i-1}$ are in it is at most
\begin{equation}
\label{eq:bound2} {k \choose 2} \frac{M_{i-1}(M_{i-1}-1)}{(kB - M_0 - \ldots - M_{i-2})(kB - M_0 - \ldots - M_{i-2}-1)}
\leq {k \choose 2} \left( \frac{M_{i-1}}{kB - M_0 - \ldots - M_{i-2}} \right)^2 .
% \onote{not\ relevant\ anymore: \frac{k(k-1)}{2} \left( \frac{M_{i-1}}{n k \alpha - k(E_1+\ldots+E_{i-2})  - M_0 - \ldots - M_{i-2}} \right)^2 .}
\end{equation}
Observe that  $E_{j+1}+M_j \leq D V_j$ for $j \geq 1$ since each vertex in $V_j$ is incident to less than $D$ edges.
Moreover, $E_1+M_0 \leq R_0$. Note that this implies that $kB-(M_0 + M_1+\ldots + M_{i-2}) \geq k \alpha' n - k\left[
R_0 + D(V_1+\ldots + V_{i-2}) \right]$. Recall that we are conditioning on the event $\cal F$ that the bounds
in~\eqref{eq:req2} and~\eqref{eq:req4} hold, and hence we can further bound $kB-(M_0 + M_1+\ldots + M_{i-2}) \geq k
\alpha' n - k\left[ \epsilon_0 n D+ D(\epsilon_1 n+\ldots + \epsilon_{i-2}n)\right]\geq k \alpha' n-2kD\epsilon_0n$.
For our choice of $\epsilon_0 \leq \frac{\alpha'}{12kD^2}$ we hence obtain
\[ k B - (M_0 + \ldots + M_{i-2}) \geq \alpha' \frac{k}{2} n .\]
By combining (\ref{eq:bound1}) and (\ref{eq:bound2}), using the union bound for all relevant blocks, we get
\begin{align*}
 E[|E_i|] &\leq \left( (k-1) \frac{M_0+\ldots+M_{i-2}}{\alpha' \frac{k}{2} n} +\alpha' n {k \choose 2}
 \frac{ M_{i-1}}{(\alpha' \frac{k}{2} n)^2} \right) M_{i-1} \leq 2M_{i-1} \left( \frac{M_0+\ldots+M_{i-1}}{\alpha' n}
 \right)\\
& \leq 2D^2V_{i-1} \left( \frac{R_0/D+V_1+\ldots+V_{i-1}}{\alpha' n} \right).
\end{align*}
Since we are conditioning on the event $\cal F$ that~\eqref{eq:req2} and \eqref{eq:req4} hold, we can bound $R_0$ and
$V_j$ and obtain
\begin{equation*}
E[|E_i|]  \leq 2D^2V_{i-1} \left( \frac{\epsilon_0 n +\epsilon_1 n +\ldots+\epsilon_{i-1}n}{\alpha' n} \right)\leq 2D^2
V_{i-1}\frac{2 \epsilon_0}{\alpha'}\leq \frac{V_{i-1}}{3k},
\end{equation*}
where we have substituted $\epsilon_0=\frac{\alpha'}{12 D^2 k}$. Together with the observation that $|V_i|\leq k|E_i|$
we have hence shown in our setting that
 \begin{equation}\label{eq:expect-i}
 E[|V_i|]\leq V_{i-1}/3.
 \end{equation}
The large deviation bound (\ref{eq:req2}) again follows from Azuma's inequality~\eqref{eq:azuma}. We pick the sequence
of $kB$ vertices (after removing $E_0,\ldots,E_{i-1}$) vertex by vertex and let $X_i$ to be the expectation of $|V_i|$
after picking the $i$ first vertices of the sequence. Then, $X_0, X_1, \ldots, X_{kB}$ form a martingale and choosing
one vertex of the sequence affects $|V_i|$ by at most $k$. Therefore, $|X_i-X_{i-1}|\leq k$ when bounding $|V_i|$. We
now apply Azuma's inequality~\eqref{eq:azuma} with $t=2k\sqrt{k \alpha' n (\ln n+1) }$ and obtain, in our setting of
fixed sets $V_j$, fixed degrees of their elements and fixed sets $E_j$ for $0 \leq j \leq i-1$, and conditioning on the
event $\cal F$,
\[\pr(\left| |V_i|-E[|V_i|\right|] \geq O(\sqrt{n \log n}))\leq \frac{1}{2n^2}.
\]
Using Eq.~\eqref{eq:expect-i}, the induction hypothesis and the fact that we are conditioning on bound~\eqref{eq:req2}
to hold, we get that with probability at least $1-\frac{1}{2n^2}$, \[|V_i|\leq E[|V_i|]+O(\sqrt{n \log n})\leq
\frac{V_{i-1}}{3}+O(\sqrt{n \log n}) \leq \frac{\epsilon_{i-1}n}{3}+O(\sqrt{n \log n})\leq \epsilon_in.\] Since this
holds for all fixed sets $V_j$, fixed degrees of their elements and fixed sets $E_j$ for $0 \leq j \leq i-1$, it also
holds when we remove this conditioning (while still conditioning on the event $\cal F$). By the union bound, $\cal F$
does not hold with probability at most $\frac{2^i-\frac{1}{2}}{n^2}$. Hence, with probability at least
$1-\frac{2^i}{n^2}$, $V_i \leq \epsilon_i n$ and we have shown the bound in (\ref{eq:req2}).
\end{proof}

This terminates the proof of Lemma~\ref{lem:H_matches} for all  $k \geq 12$. For smaller values of $k$ our bound of
$\alpha \leq 2^k/(12 \cdot e \cdot k^2)$ is smaller than the bound obtained by Laumann et al. \cite{laumann09dimer},
and hence the Lemma also holds. Hence we have shown Thm.~\ref{thm:threshold}.
\end{proof}

\begin{remark}
Note that in the limit of large $k$ our results can be tightened to give a bound of $\alpha \leq (D-O(\sqrt{D \log
D}))/k=2^k/(4 \cdot e \cdot k^2)-O(\sqrt{2^{k/2}}\sqrt{k})$ for the satisfiable region. The analysis essentially
changes only for the bound on $E[|V_0|]$ in the beginning of the $i=0$ base case, where we have to use the tail bound
for the Poisson distribution for smaller deviations.
\end{remark}

\section*{Acknowledgments}
The authors would like to thank the Erwin Schr\"odinger International Institute in Vienna, where part of this work was
done, for its hospitality. We thank Noga Alon, Chris Laumann and Oded Regev for valuable discussions.

%\bibliographystyle{alpha}
%\bibliography{qlll}

\begin{thebibliography}{DKMP08}

\bibitem[Alo91]{alon_parallel_1991}
N.~Alon.
\newblock A parallel algorithmic version of the local lemma.
\newblock {\em Random Structures and Algorithms}, 2(4):367--378, 1991.

\bibitem[AP04]{achlioptas2004threshold}
D.~Achlioptas and Y.~Peres.
\newblock {The threshold for random k-SAT is 2\^{} k log 2-O (k)}.
\newblock {\em American Mathematical Society}, 17(4):947--974, 2004.

\bibitem[AS04]{alon2004probabilistic}
N.~Alon and J.H. Spencer.
\newblock {\em {The probabilistic method}}.
\newblock Wiley-Interscience, 2004.

\bibitem[Azu67]{azuma1967weighted}
K.~Azuma.
\newblock {Weighted sums of certain dependent random variables}.
\newblock {\em Tohoku Mathematical Journal}, 19(3):357--367, 1967.

\bibitem[BBC{\etalchar{+}}01]{bollobs_scaling_2001}
B.~Bollob\'{a}s, C.~Borgs, J.T. Chayes, J.H. Kim, and D.B. Wilson.
\newblock The scaling window of the {2-SAT} transition.
\newblock {\em Random Structures and Algorithms}, 18(3):201--256, 2001.

\bibitem[Bec91]{beck_algorithmic_1991}
J.~Beck.
\newblock An algorithmic approach to the {L}ov\'{a}sz local lemma.
\newblock {\em Random Structures and Algorithms}, 2(4):343--365, 1991.

\bibitem[BMR09]{bravyi2009bounds}
S.~Bravyi, C.~Moore, and A.~Russell.
\newblock {Bounds on the quantum satisfibility threshold}.
\newblock {\em Arxiv preprint arXiv:0907.1297}, 2009.

\bibitem[Bol01]{bollobas_random_2001}
B.~Bollob\'{a}s.
\newblock {\em Random Graphs}.
\newblock Cambridge University Press, 2 edition, 2001.

\bibitem[Bra06]{bravyi2006efficient}
S.~Bravyi.
\newblock {Efficient algorithm for a quantum analogue of 2-SAT}.
\newblock {\em Arxiv preprint quant-ph/0602108}, 2006.

\bibitem[BS07]{beigi_complexity_2007}
S.~Beigi and P.W. Shor.
\newblock On the complexity of computing {Zero-Error} and {H}olevo capacity of
  quantum channels.
\newblock {\em Arxiv preprint arXiv:0709.2090}, September 2007.

\bibitem[BT08]{bravyi_complexity_2008}
S.~Bravyi and B.~Terhal.
\newblock Complexity of stoquastic frustration-free hamiltonians.
\newblock {\em Arxiv preprint arXiv:0806.1746}, June 2008.

\bibitem[CR92]{chvatal_mick_1992}
V.~Chv\'{a}tal and B.~Reed.
\newblock Mick gets some (the odds are on his side).
\newblock In {\em Proceedings of the 33rd Annual Symposium on Foundations of
  Computer Science}, pages 620--627. {IEEE} Computer Society, 1992.

\bibitem[CS00]{czumaj_coloring_2000}
A.~Czumaj and C.~Scheideler.
\newblock Coloring non-uniform hypergraphs: A new algorithmic approach to the
  general {L}ov\'{a}sz local lemma.
\newblock {\em In Proceedings of the 11th Symposium on Discrete Algorithms},
  17:30---39, 2000.

\bibitem[Die97]{diestel1997graph}
R.~Diestel.
\newblock {\em {Graph Theory (Graduate Texts in Mathematics)}}.
\newblock Springer Heidelberg, 1997.

\bibitem[DKMP08]{diaz_new_2008}
J.~Diaz, L.~Kirousis, D.~Mitsche, and X.~{Perez-Gimenez}.
\newblock A new upper bound for {3-SAT}.
\newblock {\em Arxiv preprint arXiv:0807.3600}, July 2008.

\bibitem[EL75]{erdos1975problems}
P.~Erd\"os and L.~{L}ov\'{a}sz.
\newblock {Problems and results on 3-chromatic hypergraphs and some related
  questions}.
\newblock {\em Infinite and finite sets}, 2:609--627, 1975.

\bibitem[Fei02]{FeigeRSAT}
Uriel Feige.
\newblock Relations between average case complexity and approximation
  complexity.
\newblock In {\em Proc. 34th Annual {ACM} Symp. on Theory of Computing (STOC)},
  pages 534--543, 2002.

\bibitem[For09]{fortnow_kolmogorov}
L.~Fortnow.
\newblock A {K}olmogorov {C}omplexity {P}roof of the {L}ov\'{a}sz {L}ocal
  {L}emma.
\newblock {\em
  http://blog.computationalcomplexity.org/2009/06/kolmogorov-complexity-proof-%
of-lov.html}, 2009.

\bibitem[Fri99]{friedgut_sharp_1999}
E.~Friedgut.
\newblock Sharp thresholds of graph properties, and the {k-Sat} problem.
\newblock {\em Journal of the American Mathematical Society}, 12(4):1017--1054,
  October 1999.

\bibitem[Goe92]{goerdt_threshold_1992}
A.~Goerdt.
\newblock A threshold for unsatisfiability.
\newblock In {\em Mathematical Foundations of Computer Science 1992}, pages
  264--274. 1992.

\bibitem[Hal35]{hall1935representatives}
P.~Hall.
\newblock {On representatives of subsets}.
\newblock {\em J. London Math. Soc}, 10(1):26--30, 1935.

\bibitem[Hoe63]{hoeffding1963probability}
W.~Hoeffding.
\newblock {Probability inequalities for sums of bounded random variables}.
\newblock {\em Journal of the American Statistical Association}, pages 13--30,
  1963.

\bibitem[HS03]{hajiaghayi_satisfiability_2003}
M.T. Hajiaghayi and G.B. Sorkin.
\newblock {The satisfiability threshold of random 3-SAT is at least 3.52}.
\newblock {\em Arxiv preprint math/0310193}, 2003.

\bibitem[KKL03]{kaporis_selecting_2003}
A.C. Kaporis, L.M. Kirousis, and E.~Lalas.
\newblock Selecting complementary pairs of literals.
\newblock {\em Electronic Notes in Discrete Mathematics}, 16:47--70, October
  2003.

\bibitem[Kos97]{kostrikin_linear_1997}
A.I. Kostrikin.
\newblock {\em Linear algebra and geometry}.
\newblock Taylor \& Francis, 1997.

\bibitem[KS94]{kirkpatrick_critical_1994}
S.~Kirkpatrick and B.~Selman.
\newblock Critical behavior in the satisfiability of random boolean
  expressions.
\newblock {\em Science}, 264(5163):1297--1301, May 1994.

\bibitem[Liu06]{liu_consistency_2006}
{Y.K.} Liu.
\newblock Consistency of local density matrices is {QMA-Complete}.
\newblock In {\em Approximation, Randomization, and Combinatorial Optimization.
  Algorithms and Techniques}, pages 438--449. 2006.

\bibitem[Liu07]{liu_complexity_2007}
{Y.K.} Liu.
\newblock The complexity of the consistency and n-representability problems for
  quantum states.
\newblock {\em Arxiv preprint arXiv:0712.3041}, December 2007.

\bibitem[LLM{\etalchar{+}}09]{laumann09dimer}
C.R. Laumann, A.M. L\"{a}uchli, R.~Moessner, A.~Scardicchio, and S.~L Sondhi.
\newblock On product, generic and random generic quantum satisfiability.
\newblock {\em Arxiv preprint arXiv:0910.2058}, October 2009.

\bibitem[LMSS09]{laumann2009phase}
C.R. Laumann, R.~Moessner, A.~Scardicchio, and S.L. Sondhi.
\newblock Phase transitions and random quantum satisfiability.
\newblock {\em Arxiv preprint arXiv:0903.1904}, March 2009.

\bibitem[MMZ05]{mezard_clustering_2005}
M.~M\'{e}zard, T.~Mora, and R.~Zecchina.
\newblock Clustering of solutions in the random satisfiability problem.
\newblock {\em Physical Review Letters}, 94(19):197205, May 2005.

\bibitem[Mos08]{moser_derandomizinglovasz_2008}
R.A. Moser.
\newblock Derandomizing the {L}ov\'{a}sz local lemma more effectively.
\newblock {\em Arxiv preprint arXiv:0807.2120}, July 2008.

\bibitem[Mos09]{moser2009constructiveSTOC}
R.A. Moser.
\newblock {A constructive proof of the {L}ov\'{a}sz Local Lemma}.
\newblock In {\em Proceedings of the 41st Annual ACM Symposium on Theory of
  Computing}, pages 343--350. ACM New York, NY, USA, 2009.

\bibitem[MP87]{mani-levitska_decomposition_1987}
P.~{Mani-Levitska} and J.~Pach.
\newblock Decomposition problems for multiple coverings with unit balls.
\newblock {\em Manuscript}, 1987.

\bibitem[MPZ02]{mezard2002analytic}
M.~Mezard, G.~Parisi, and R.~Zecchina.
\newblock {Analytic and algorithmic solution of random satisfiability
  problems}.
\newblock {\em Science}, 297(5582):812--815, 2002.

\bibitem[MR98]{molloy_further_1998}
M.~Molloy and B.~Reed.
\newblock Further algorithmic aspects of the local lemma.
\newblock {\em In Proceedings of the 30th Annual {ACM} Symposium on Theory of
  Computing}, pages 524---529, 1998.

\bibitem[MT09]{moser2009constructive}
R.A. Moser and G.~Tardos.
\newblock {A constructive proof of the general {L}ov\'{a}sz Local Lemma}.
\newblock {\em Arxiv preprint arXiv:0903.0544}, 2009.

\bibitem[Spe77]{spencer_asymptotic_1977}
J.~Spencer.
\newblock Asymptotic lower bounds for {R}amsey functions.
\newblock {\em Discrete Math}, 20(1):69--76, 1977.

\bibitem[Sri08]{srinivasan_improved_2008}
A.~Srinivasan.
\newblock Improved algorithmic versions of the {L}ov\'{a}sz local lemma.
\newblock In {\em Proceedings of the nineteenth annual {ACM-SIAM} symposium on
  Discrete algorithms}, pages 611--620, 2008.

\end{thebibliography}

\newcommand{\etalchar}[1]{$^{#1}$}

\end{document}